\DeclareMathOperator{\Span}{span}
\DeclareMathOperator{\Diag}{diag}
\DeclareMathOperator{\Vol}{vol}
\DeclareMathOperator{\Rank}{rank}
\DeclareMathOperator{\Trace}{Tr}
\DeclareMathOperator{\Dim}{dim}
\DeclareMathOperator{\Corr}{corr}
\newtheorem{Def}{\textbf{Definition}}
\newtheorem{Problem}{\textbf{Problem}}
\newtheorem{Theorem}{\textbf{Theorem}}
\newtheorem{Lemma}{\textbf{Lemma}}
\begin{document}
%
\title{A Volume Correlation Subspace Detector for signals buried in unknown clutter}
	\author{Hailong Shi, Hao Zhang, and Xiqin Wang \\Intelligent Sensing Lab, Department of Electronic Engineering, Tsinghua University%
		\thanks{Intelligent Sensing Lab, Department of Electronic Engineering, Tsinghua University, Beijing, E-mail: shl06@mails.tsinghua.edu.cn, haozhang@mail.tsinghua.edu.cn, wangxq\_ee@tsinghua.edu.cn. This paper is supported by project 61201356 of National Natural Science Foundation of China.}}



%


\maketitle

\begin{abstract}
Detecting the presence of target subspace signals with unknown clutters is a well-known hard problem encountered in various signal processing applications. Traditional methods fails to solve this problem because prior knowledge of clutter subspace is required, which can not be obtained when target and clutter are intimately mixed. In this paper, we propose a novel subspace detector that can detect target signal buried in clutter without knowledge of clutter subspace. This detector makes use of the geometrical relation between target and clutter subspaces and is derived based upon the calculation of volume of high dimensional geometrical objects. Moreover, the proposed detector can accomplish the detection simultaneously with the learning processes of clutter, a property called "detecting while learning". The performance of detector was showed by theoretical analysis and numerical simulation.
\end{abstract}


\begin{IEEEkeywords}
 subspace detector, unknown clutter, volume
\end{IEEEkeywords}

%
\IEEEpeerreviewmaketitle

\section{Introduction}
\par
We consider the following subspace signal detection problem widely existing in communication, radar, sonar and other fields of signal processing:

\begin{Problem}\label{Pro3}
	Let $\bm{H}$ be a Hilbert space, $\bm{H}_S\subset\bm{H}$ be a KNOWN signal subspace and $\bm{H}_C\subset\bm{H}$ be an UNKNOWN clutter subspace. Given the sampled data $\bm{y}\in\bm{H}$, could we determine whether $\bm y$ lies in $\bm{H}_S\oplus\bm{H}_C$ or entirely in $\bm{H}_C$ with the influence of random noise? In other words, whether $\bm y$ satisfies
\begin{equation}
	\bm y=\bm s+ \bm c + \bm w, 
\end{equation}
or
\begin{equation}
  \bm y= \bm c + \bm w,
\end{equation}
where $\bm s\in\bm{H}_S,\ \bm c\in\bm{H}_C,\ \bm s, \bm c\neq 0$ and $w$ is random noise.
\end{Problem}

Detecting target signal in certain signal subspace with known clutter has been considered by several researchers and various schemes has been proposed. Among these works, the clutter subspace is commonly modeled to have low rank\cite{scharf1994matched,raghavan2012statistical}, and the most remarkable approach is the Matched Subspace Detector (MSD) by Scharf et. al. \cite{scharf1994matched}, which is actually a generalized energy detector using the prior knowledge of target and clutter subspaces. Although tremendous variations and applications of MSD has appeared \cite{scharf1996adaptive,kraut1999cfar,kraut2001adaptive,desai2003robust}, the key precondition for the success of matched subspace detector is that the clutter subspace $\bm{H}_C$ must be KNOWN beforehand. It usually can not be satisfied in practice, e.g., in radar, reconnaissance,  mobile communication, etc. As far as we know, because the projection-based detectors derived from GLRT can not be constructed explicitly, there hasn't been any detector that can work when the clutter subspace is totally unknown.
\par
However, the structure of clutter subspace could be explored by successively sampling from it. As a matter of fact, a generic property of randomly sampling in linear space is that the sampled vectors are generally linearly independent, so ideally the clutter subspace could be "reconstructed" by obtaining multiple samples from the clutter subspace, which naturally form the basis of the clutter subspace. But it should be noted that generally the information of target signal is mixed intimately with the clutter in the samples and it is impossible to separated the "pure" clutter from target signal so that the basis for clutter subspace could be extracted alone. It is the core difficulty for detecting the target signal against unknown structured clutters. 

In this paper, a novel method to detect target signal embedded in an unknown structured low-rank clutter was given. The main idea of our detector is utilizing the geometrical characteristic of the sampled data. Here the volume, which is a common concept for geometrical objects, is  defined to measure relationships between subspaces (more concretely, it is the parallelotope with its edges being the basis vectors of subspace). An intuition of using volume for detection is that, since the "volume" of low-dimensional subspaces in high-dimensional linear space is zero, the judgment of whether or not a target vector lies in certain subspace could be transformed naturally to the calculation of "volume" of parallelotope built by the target vector and the basis vectors of that subspace, in space with pertinent dimension. If the 'volume' is zero, then the conclusion can be drawn that the subspace contains the target vector, otherwise the target vector must lie outside the subspace. 
Indeed, Although there have been researches using similar geometrical approaches to perform subspace signal detection\cite{ramirez2010detection,cochran1995geometric}, the advantage that the proposed volume-based detector requires no knowledge of clutter is first discovered and analyzed by us.

The remainder of this paper is organized as follows: Some preliminary backgrounds on the geometrical concepts for linear subspaces such as principal angles and volumes were summarized in section II. Then the volume-based subspace detector was introduced and its property was discussed in detail in section III and IV respectively.

\section{Preliminary Background}

In this section, some important concepts of linear space geometry were reviewed concisely. Only necessary material for our discussion was put forward for the space limitation. For details, please see \cite{absil2004riemannian} and reference therein.

\subsection{Principal Angles between Subspaces}
\par
The concept of principal angles \cite{miao1992principal} is the natural generalization of that of angles between two vectors. Principal angles can be used to formulate the relationship between two subspaces.

\begin{Def}
	For two linear subspaces $\bm{H}_1$ and $\bm{H}_2$, with dimensions $\dim(\bm{H}_1)=d_1,\dim(\bm{H}_2)=d_2$. Take $m = \min(d_1,d_2)$, then the principal angles $0 \leq \theta_1 \leq \cdots \leq \theta_m \leq \pi/2$ between $\bm{H}_1$ and $\bm{H}_2$ are defined by
	\begin{eqnarray}
	\cos \theta_i = \max_{\bm u_i \in \bm{H}_1,\bm v_i \in \bm{H}_2}\bm u_i^T \bm v_i,\qquad \textit{subject to} \nonumber \\
	\|\bm u_i\|_2=\|\bm v_i\|_2=1,\qquad
		\bm u_i^T \bm u_j = 0, \bm v_i^T \bm v_j=0,\quad \nonumber
	\end{eqnarray}
where $j = 1,\cdots,i-1, i = 1,\cdots m$.
\end{Def}

\par
As important concepts of linear space geometry, the principal angles are widely applied in scientific and engineering fields. The geodesic distance which is the key metric measure on Grassmann manifold, as well as numerous kinds of distance measures, is defined using the principal angles \cite{absil2004riemannian,qiu2005unitarily}, such as the Chordal Distance, Binet-Cauchy Distance and Procrustes Distance commonly seen in signal process applications \cite{hamm2008grassmann,hamm2008subspace}.
Moreover, the volume of subspace used in this paper to construct our subspace detector is also closely related to the principal angles.

\subsection{The Volume of a matrix}

\par
 Suppose a full-rank matrix $\bm{X}\in\mathbb{R}^{n\times{d}}$, $d<n$, then its $d$-dimensional volume is defined as \cite{ben1992volume}

\begin{equation}\label{VolumeDef1}
\Vol_d(\bm X):= \prod_{i=1}^d \sigma_i,
\end{equation}
where $\sigma_1 \geq \sigma_2 \geq \cdots \geq \sigma_d > 0$ are the non-zero singular values of $\bm{X}$. For $\bm X$ is of full column rank, its  $d$-dimensional volume can be written equivalently as \cite{miao1992principal,ben1992volume}
\begin{equation}\label{VolumeDef2}
\Vol_d (\bm X) = \sqrt{\det(\bm{X^TX})}.
\end{equation}

The following simple lemma is widely useful in application of volume for subspaces. It means that the $k$-dimensional volume of the basis of subspace with dimension less than $k$ is definitely zero.

\begin{Lemma}\label{lemma2}
	Suppose $\bm{X}^{(n)}=[\bm{x}_1,\cdots,\bm{x}_n]$ be a matrix whose columns are a group of vectors in a Hilbert space $\bm H$ and $\dim(\Span{(\bm{X}^{(n)})})=i$, then
	\begin{equation}
	\Vol_{k}(\bm{X}^{(n)})=0,\quad{i<k}
	\end{equation}
\end{Lemma}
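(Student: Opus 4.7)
The plan is to invoke the singular-value definition of volume in (3). By hypothesis $\dim(\Span(\bm{X}^{(n)}))=i$, so the matrix $\bm{X}^{(n)}$ has rank exactly $i$, which means its singular value decomposition yields precisely $i$ strictly positive singular values $\sigma_1\ge\cdots\ge\sigma_i>0$, with every further singular value equal to zero. Extending the formula in (3) to the case $k>i$ in the natural way by listing additional zero singular values, the $k$-dimensional volume $\prod_{j=1}^{k}\sigma_j$ then contains at least one zero factor and therefore vanishes.

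As a geometric cross-check using the Gram-determinant form (4), I would select any $k$ columns of $\bm{X}^{(n)}$ to form a submatrix $\bm{Y}$. Since these columns all lie in the $i$-dimensional $\Span(\bm{X}^{(n)})$ with $k>i$, they must be linearly dependent, which forces $\det(\bm{Y}^T\bm{Y})=0$. This perspective makes transparent why no $k$-dimensional parallelotope with positive volume can be assembled from columns of a rank-$i$ collection of vectors once $k$ exceeds $i$, independently of how one chooses to aggregate the columns.

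The only real subtlety is fixing a convention for $\Vol_k(\cdot)$ when the argument has $n\neq k$ columns, since (3)--(4) as stated apply to matrices with exactly $d$ columns. I would resolve this by adopting the singular-value formula as the defining extension: it is coordinate-free and agrees with (4) whenever the Gram formulation applies. Once that convention is settled, the proof collapses to the observation that the rank drop forces a zero factor into the product, so I do not anticipate any substantive obstacle beyond this definitional clarification.
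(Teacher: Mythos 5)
Your proposal is correct. The paper actually states this lemma without any proof, treating it as an immediate consequence of the definitions, so there is nothing to compare against; your singular-value argument (rank $i$ forces a zero factor into $\prod_{j=1}^{k}\sigma_j$ once $k>i$) together with the Gram-determinant cross-check ($\det(\bm{Y}^T\bm{Y})=0$ for any $k$ linearly dependent columns) is exactly the standard justification and fills that gap. The definitional caveat you raise is genuine --- the paper's (\ref{VolumeDef1})--(\ref{VolumeDef2}) are stated only for full-column-rank matrices with exactly $d$ columns --- but it is harmless here, since in every application the paper makes of this lemma the matrix has exactly $k$ columns, so the Gram form $\sqrt{\det(\bm{X}^T\bm{X})}$ applies directly and vanishes because the $k\times k$ Gram matrix has rank at most $i<k$.
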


$d$-dimensional volume provides a kind of measure of separation between two linear subspaces. Normalized by individual volumes of both matrices, the volume of a matrix composed of two matrices gives a new kind of correlation, called the volume correlation. For $n$-dimensional Hilbert space $\bm{H}$ and its two subspaces $\bm{H}_1$ and $\bm{H}_2$ with dimensions $\dim(\bm{H}_1)=d_1,\dim(\bm{H}_2)=d_2$, we have
\begin{equation}\label{CorrVol}
\Corr_{\textrm{vol}}(\bm{H}_1,\bm{H}_2)=\frac{\Vol_{d_1+d_2}([\bm{X}_1,\bm{X}_2])}{\Vol_{d_1}(\bm{X}_1)\Vol_{d_2}(\bm{X}_2)},
\end{equation}
where $\bm X_1$ and $\bm X_2$ are basis matrices of $\bm H_1$ and $\bm H_2$, and $[\bm{X}_1,\bm{X}_2]$ means putting columns of matrices $\bm X_1$ and $\bm X_2$ together. It is closely related to the principal angles between subspaces \cite{miao1992principal},
\begin{equation}\label{VolAng}
\Corr_{\textrm{vol}}(\bm{H}_1,\bm{H}_2) = \prod_{j=1}^{\min(d_1,d_2)}\sin\theta_j(\bm{H}_1, \bm{H}_2),
\end{equation}
where $0\leq\theta_j(\bm{H}_1, \bm{H}_2)\leq2\pi,1\leq{j}\leq\min(d_1,d_2)$ are the principal angles of subspaces $\bm{H}_1$ and $\bm{H}_2$.

\par
It can be seen intuitively from (\ref{VolAng}) that the volume correlation $\Corr_{\textrm{vol}}(\bm{H}_1,\bm{H}_2)$ can actually play the role of distance measure between subspaces $\bm{H}_1$ and $\bm{H}_2$. When $\bm{H}_1$ and $\bm{H}_2$ have vectors in common, i.e., $\Dim(\bm{H}_1\bigcap\bm{H}_2)\geq1$, we have $\Corr_{\textrm{vol}}(\bm{H}_1,\bm{H}_2)=0$. On the other side, when $\bm{H}_1$ is orthogonal to $\bm{H}_2$, we have $\Vol_{d_1+d_2}([\bm X_1, \bm X_2])=\Vol_{d_1}(\bm X_1)\Vol_{d_2}(\bm X_2)$, in other words, $\Corr_{\textrm{vol}}(\bm{H}_1,\bm{H}_2)=1$. Although volume correlation may not rigorously be a metric measure, we still regard it as a generalized distance measure that plays a key role in our proposed subspace detector.

\section{The Correlation Subspace Detector in a noiseless environment}

From this section, we are going to introduce the volume correlation subspace detector (or VC subspace detector briefly) step by step. In order to fully convey the geometrical intuition about our subspace detector, in this section, we temporarily assume the noise component is not present, i,e., $\bm w=0$ in Problem 1. For clearance and easy of understanding, we just give some main idea about the geometrical explanation of the proposed detector in noiseless environment, and leave rigorous analysis of the noisy situation for the next section.

\subsection{Main Idea}

Unknown clutters with subspace structures were the primary obstacle for efficient detection of target signal. To reach the purpose, the designer of detectors must find the way to clarify the intrinsic construction of clutter subspace. Just as most of the traditional approaches for background learning, multiple samples were adopted to explore the clutter subspace. The following observation is the foundation for the exploration of clutter subspace.

\begin{itemize}
	\item Suppose $\bm H$ be a $n$-dimensional Hilbert space, $\bm{x}_1,\cdots,\bm{x}_k$, $k<n$ be randomly sampled vectors from $\bm H$, then in the generic situation, we have
	\begin{equation}
	\Dim(\Span\{\bm{x}_1,\cdots,\bm{x}_k\})=k,
	\end{equation}
	In words, random samples $\bm{x}_1,\cdots,\bm{x}_k$ are generally linearly independent.
	\item In the case of $k\geq{n}$, then in the generic situation, we have
	\begin{equation}
	\Dim(\Span\{\bm{x}_1,\cdots,\bm{x}_k\})=n,
	\end{equation}
	In words, $\bm{x}_1,\cdots,\bm{x}_k$ are linearly dependent.
\end{itemize}

Let $\bm{H}_C$ be the unknown clutter subspace with unknown dimension $d_1$, $\bm{H}_S$ be the known target subspace with dimension $d_2$, $\bm{y}_1,\cdots,\bm{y}_{d_1}$ be samples representing our sample subspace. Without loss of generality, we assume
$\Dim(\bm{H}_S\bigcap\bm{H}_C)=0$ throughout this paper. The critical point when we explore the clutter $\bm{H}_C$ is that, the sample subspace may contain both clutter and target signals in general. In other words, we cannot guarantee that the sample subspace is a "pure" clutter subspace. What we commonly get are samples like:
\begin{equation}
\bm{y}_i = \bm{s}_i + \bm{c}_i, \quad{\bm{s}_i\in\bm{H}_S,\ \bm{c}_i\in\bm{H}_C},\ i=1,2,\cdots,
\end{equation}
It is impossible to separate the clutter and target signal apart and build the clutter subspace from these $\bm y_i$. How does the information of sample subspace be mined effectively?

The volume correlation between subspaces is helpful for us to eliminate the impact of mixing of clutter and target signal. In fact, Let $\bar{\bm s}_1,\cdots,\bar{\bm s}_{d_2}$ be the known basis vectors of $\bm{H}_S$. 
It has been mentioned that in the generic scenario, different $\bm{y}_i$ sampled from $\bm H_S \oplus\bm{H}_C$ are linearly independent. In other words, innovative directions of basis vectors in $\bm H_S \oplus\bm{H}_C$ are revealed continually along with the sampling process as follows,
\begin{align}
&\Dim(\Span\{\bm{y}_1,\bar{\bm s}_1,\cdots,\bar{\bm s}_{d_2}\})=1+d_2,\nonumber\\
&\Dim(\Span\{\bm{y}_1,\bm{y}_2,\bar{\bm s}_1,\cdots,\bar{\bm s}_{d_2}\})=2+d_2,\nonumber\\
&\cdots\cdots\nonumber\\
&\Dim(\Span\{\bm{y}_1,\cdots,\bm{y}_{d_1},\bar{\bm s}_1,\cdots,\bar{\bm s}_{d_2}\})=d_1+d_2,
\end{align}
The question is what will happen next. 
The above process could be analyzed in another way from the viewpoint of volume mentioned earlier. In particular, the core idea of our subspace detector in noiseless environment could be illustrated fully using volume. Firstly, when both the signal and clutter are present, we have
\begin{align}
&\Vol_{1+d_2}([\bm{y}_1,\bar{\bm s}_1,\cdots,\bar{\bm s}_{d_2}])>0,\nonumber\\
&\Vol_{2+d_2}([\bm{y}_1,\bm{y}_2,\bar{\bm s}_1,\cdots,\bar{\bm s}_{d_2}])>0,\nonumber\\
&\cdots\cdots\nonumber\\
&\Vol_{d_1+d_2}([\bm{y}_1,\cdots,\bm{y}_{d_1},\bar{\bm s}_1,\cdots,\bar{\bm s}_{d_2}])>0,
\end{align}
Next, the magic will happen for the next dimension, i.e., when there are $d_1+d_2+1$ sample vectors, there will be
\begin{equation}
\Vol_{d_1+d_2+1}([\bm{y}_1,\cdots,\bm{y}_{d_1},\bm{y}_{d_1+1},\bar{\bm s}_1,\cdots,\bar{\bm s}_{d_2}])=0,\label{volscc}
\end{equation}
while on the other hand,
\begin{equation}
\Vol_{d_1+1}(\Span\{\bm{y}_1,\cdots,\bm{y}_{d_1},\bm{y}_{d_1+1}\})>0.\label{volsc}
\end{equation}
The reason is because, $d_1+d_2+1$ sample vectors in this scenario have not spanned the entire subspace $\bm H_S \oplus \bm H_C$ according to the previous statement of randomly sampling; but $\bm{y}_1,\cdots,\bm{y}_{d_1},\bm{y}_{d_1+1},\bar{\bm s}_1,\cdots,\bar{\bm s}_{d_2}$ can span $\bm H_S \oplus \bm H_C$, in another word,
\begin{eqnarray}
\Dim(\Span\{\bm{y}_1,\cdots,\bm{y}_{d_1},\bm{y}_{d_1+1},\bar{\bm s}_1,\cdots,\bar{\bm s}_{d_2}\})\nonumber \\
=\Dim(\bm{X}_S\oplus\bm{X}_C)=d_1+d_2, \hspace{2cm}
\end{eqnarray}
On the other hand,  if the sample subspace only contains pure clutter, we obtain
\begin{equation}
\Vol_{d_1+d_2+1}([\bm{y}_1,\cdots,\bm{y}_{d_1},\bm{y}_{d_1+1},\bar{\bm s}_1,\cdots,\bar{\bm s}_{d_2}])=0,\label{volcc}
\end{equation}
and
\begin{equation}
\Vol_{d_1+1}(\Span\{\bm{y}_1,\cdots,\bm{y}_{d_1},\bm{y}_{d_1+1}\})=0.\label{volc}
\end{equation}
(\ref{volscc}), (\ref{volsc}), (\ref{volcc}) and (\ref{volc}) indicates that, $d_1+1$ is the critical number of samples for detection of target signal in the background of clutter with unknown subspace structure, i.e., the "breakpoint". The knack of detection in this noiseless situation is, sampling continually, computing the volume of parallelotope spanned by all the sample vectors and known basis of target subspace at various dimensions and inspect the change of results. Once the volume vanishes, it means the number of samples reaches the critical point. Then the process of sampling should be stopped and the volume of sample vector themselves is calculated. The decision can be made based on whether the result is zero, i.e., whether (\ref{volsc}) or (\ref{volc}).

\section{The Volume-Correlation Subspace Detector in noisy environment}
\subsection{Main Idea}
\par
The main problem here is the sample subspace has been contaminated by random noise and can not be used directly to compute the volume correlation in VC subspace detector. Therefore the noise must be cleared in advance. For most statistical signal processing algorithms concerned with subspaces, such as MUSIC, ESPRIT and so on, the target signal and random noise are separated into signal subspaces and noise subspaces by eigen-decomposition of correlation matrices firstly for further treatment. It implies the natural strategy of extracting signal subspaces for follow-up analysis and discarding noise subspaces simply for noise elimination.

To be specific, we reconsider Problem 1 where $\bm w$ is assumed to be white Gaussian noise with zero mean and variance $\sigma^2$. Traditional subspace methods mentioned above deal with the correlation matrix $\mathrm{R}_{\bm y}$ of the sample data $\bm{y}$, which is denoted by
\begin{equation}\label{CorrMatrix}
\mathrm{R}_{\bm y}=\mathbb{E}\{\bm{y}\bm{y}^T\}
\end{equation}
According to our problem setting, the eigenvalues of $\mathrm{R}_{\bm y}$ could be listed as
\begin{equation}\label{realEigenvalue}
\lambda_1\geq\lambda_2\geq\cdots\geq\lambda_k\geq\lambda_{k+1}=\cdots=\lambda_n=\sigma^2,
\end{equation}
and the corresponding eigenvectors are
$$
\bm{q}_1, \bm{q}_2,\cdots,\bm{q}_k,\bm{q}_{k+1},\cdots,\bm{q}_n
$$
Denote $\bm{Q}_{SC}:=[\bm{q}_1, \bm{q}_2,\cdots, \bm{q}_k]\in\mathbb{R}^{n\times{k}}$, $\bm{Q}_{N}:=[\bm{q}_{k+1}, \bm{q}_{k+2},\cdots, \bm{q}_n]\in\mathbb{R}^{n\times(n-k)}$. It is clear that
\begin{equation}
\Span(\bm{Q}_{SC}) = \bm H_S \oplus \bm H_C,\quad k=d_1+d_2,
\end{equation}
when both signal and clutter are present, and
\begin{equation}
\Span(\bm{Q}_{SC}) =  \bm H_C,\quad k=d_1,
\end{equation}
when the sampled data contains "pure" clutter. The
$\Span(\bm{Q}_{SC})$ and $\Span(\bm{Q}_{N})$ are commonly called signal subspace and noise subspace. Asymptotically $\Span(\bm{Q}_{SC})$ could be used as proxy of $\bm{H}_S\oplus\bm{H}_C$ (or $\bm H_C$) and the main idea in previous section is workable as well in the noisy environment.

\subsection{The proposed VC subspace Detector}

The VC subspace detector is extended to noisy scenario as follows:

\begin{itemize}
	
	\item \textbf{Initial Step :}
	Denote the received data $\{\bm{y}_1,\cdots,\bm{y}_n\}$ by $\bm{R}^{(n)}$. Obtain $\{{\bm s}_1,\cdots,{\bm s}_{d_2}\}$ as the orthonormal basis vectors of known target subspace and denote it by $\bm{Q}_S$. Let the sample covariance matrix be $\hat{\mathrm{R}}^{(0)}=0$. Index $i$ is set to $1$. Set two thresholds $T$ and $\epsilon$ at appropriate values.
	
	\item \textbf{Step 1} :
	Get the new sample $\bm{y}_{i}$, compute the covariance matrix as
	\begin{equation}\label{CorrMat}
	\hat{\mathrm{R}}^{(i)}=\frac{i-1}{i}\hat{\mathrm{R}}^{(i-1)}+\frac{1}{i}\bm{y}_{i}\bm{y}_{i}^{\rm T},
	\end{equation}
	Assume the eigenvalues of $\hat{\mathrm{R}}^{(i)}$ be
	\begin{equation}\label{eigenvalue}
	\hat{\lambda}_1\geq\hat{\lambda}_2\geq\cdots\geq\hat{\lambda}_{k_i}\geq\hat{\lambda}_{k_i+1}=\cdots=\hat{\lambda}_n,
	\end{equation}
	and the corresponding eigenvectors be
	\begin{equation}\label{eigenvector}
	\hat{\bm{q}}_1, \hat{\bm{q}}_2,\cdots,\hat{\bm{q}}_{k_i},\hat{\bm{q}}_{k_i+1},\cdots,\hat{\bm{q}}_n,
	\end{equation}
	obtain the estimated basis of the sampled subspace by
	\begin{equation}
	\hat{\bm{Q}}^{(i)}=[\hat{\bm{q}}_1, \hat{\bm{q}}_2,\cdots,\hat{\bm{q}}_{k_i}],
	\end{equation}
	
	\item \textbf{Step 2 :}
	Compute the test quantity as
	\begin{equation}
	T(\bm{R}^{(i)})=\Vol_{k_i+d_2}([\hat{\bm{Q}}^{(i)},\ \bm{Q}_S]),
	\end{equation}

	\item \textbf{Step 3 :} 
	if $1/T(\bm{R}^{(i)})>T$, conclude the existence of target signal and exit;
	
	else if $|1/T(\bm{R}^{(i)})-1/T(\bm{R}^{(i-1)})|<\epsilon$, conclude the non-existence of target signal and exit;

	otherwise, set $i=i+1$ and go to step 1;
	
\end{itemize}

Remark 1. It should be emphasized that the most remarkable advantage of VC subspace detector is its feature of "Detecting while Learning". To be specific, the detection could be completed without separated sessions for background learning with VC subspace detector. As well known, background learning is very popular in adaptive processing for radar, communication and other signal processing problems. Channel equalization in communication transmission, CFAR (Constant False Alarm Rate) operation in radar detection and estimation of covariance matrices for clutter echoes in STAP (Space-Time Adaptive Processing) all belong to sessions for background learning. There are double common defects for all these schemes. The first is that the efficacy for estimation of clutter background might be influenced heavily by existence of target signal, so called as target leakage in literatures; the second is the non-homogeneousness widely existed in clutter environment which easily leads to mismatch of learning consequence with the actual clutter scenario at the target location. Nevertheless, VC subspace detector stands far away from these trouble because the process of background process is accomplished implicitly and simultaneously with the detection operation. Along with the raw data being sampled and put into work sequentially, the volume correlations are examined and tested constantly until the threshold is reached. The information of clutter subspace is being learned in the form of volumes of low-dimensional approximations of clutter subspace. At the decision point, background learning is ended spontaneously and the decision on the existence of target will be made naturally. There is no need for extra effort of background learning. The learning and detection is merged perfectly in VC subspace detector. We call this interesting property "Detection while Learning". Our VC subspace detector could be listed as blind detecting methods.

Remark 2.
It should be noted that the subspace $\hat{\bm{Q}}^{(i)}$ is actually an estimation of the real signal subspace. The accuracy of this approximation had been studied extensively \cite{stoica1989music,jeffries1985asymptotic} and the feasibility of $\hat{\bm{Q}}^{(i)}$ had been proved asymptotically. Hence we can expect the proposed VC subspace detector will asymptotically approximate the VC subspace detector in noiseless scenario, and this expectation is validated by the theory in next section.

Remark 3.
The dimension of signal-plus-clutter subspace (or clutter subspace), i.e., $k_i$ in (\ref{eigenvalue}) actually needs to be estimated. Since there are various methods can be used, like AIC or MDL\cite{akaike1974new,wax1985detection}, we will not discuss this topic in detail here.

\subsection{Theoretical Property of the VC Subspace Detector}

To avoid the vagueness brought by asymptotical conclusion of the performance of VC subspace detector in the noisy background, we give some non-asymptotical analysis on the capability of our detector with knowledge of random matrices and concentration inequalities.  We have the following result.

\begin{Theorem}\label{H1Thm1N}
	Let $H$ be $n$-dimensional Hilbert space, $\bm{H}_S$ and $\bm{H}_C$ be target and clutter subspaces of $H$ respectively, $\bm{H}_S\cap\bm{H}_C=\{0\}$, $\Dim{\bm{H}_C}=d_1$, $\Dim{\bm{H}_S}=d_2$,
\begin{equation}
\bm{y}_i=\bm{x}_i+\bm{w}_i,\qquad i=1,2,\cdots,m.\nonumber
\end{equation}
where $\bm y_i$ is the sampled data, $\bm x_i\in\bm{H}_S\oplus\bm{H}_C$ (or $\bm H_c$), $\bm{w}_i\thicksim\mathcal{N}(0,\sigma^2\mathrm{I}_n)$ is Gaussian white noise. Denote the eigenvalues of the covariance matrix of received signal $\mathrm{R}_{\bm y}$ by (\ref{realEigenvalue}),

	If the target signal presents in sample data, then for any $0<\varepsilon<1$ and $\delta>0$, if
	\begin{eqnarray}\label{Res8}
	m\geq\frac{1+\varepsilon}{(\sqrt{\delta+1}-1)^2}\cdot \hspace{5cm}\nonumber \\
	\left(\sum_{\stackrel{i,j=1}{j\neq i}}^{d_2+d_1} \frac{\lambda_i\lambda_j}{(\lambda_i-\lambda_j)^2}+(n-d_1-d_2)\sum_{i=1}^{d_2+d_1}\frac{\lambda_i\sigma^2}{(\sigma^2-\lambda_i)^2}\right),
	\end{eqnarray}
	then there exists a constant $C>0$, such that
	\begin{equation}\label{Res7}
	|T(\bm{R}^{(m)})^2|\leq\delta^{d_2} + O(\delta^{d_2+1}).
	\end{equation}
	holds with probability
	\begin{equation}\label{prob1}
	\mathbb{P}\geq1-\exp\{-\frac{(d_1+d_2)\cdot{n}\cdot\varepsilon^2}{C}\}.
	\end{equation}
	
	On the contrary, in the case of non-target, for any $0<\varepsilon<1$ and $\delta>0$, when
	\begin{eqnarray}\label{ResH0m}
	m \geq\frac{1+\varepsilon}{(\sqrt{\delta+1}-1)^2}\cdot \hspace{5cm}\nonumber \\\left( \Big( \sum_{i =1}^{d_1} \sum_{\stackrel{j=1}{j\neq i}}^{d_1} \frac{\lambda_i \lambda_j}{(\lambda_i-\lambda_j)^2}+\sum_{i =1}^{d_1}(n-d_1)\frac{\lambda_i\sigma^2}{(\sigma^2-\lambda_i)^2}  \Big)\right),
	\end{eqnarray}
	we have
	\begin{equation}\label{ResH0}
	|T(\bm R^{(m)})^2-\tau^2(\bm{H}_S,\bm{H}_C)|\leq s_{d_1-1}(\bm{Q}_C^T\bm{P}_S^{\perp}\bm{Q}_C)\delta + O(\delta^{2}),
	\end{equation}
	holds with probability
	\begin{equation}\label{prob2}
	\mathbb{P} \geq 1-\exp\{-\frac{d_1\cdot{n}\cdot \varepsilon^2}{C}\},
	\end{equation}
here $\tau(\bm{H}_S,\bm{H}_C) = \Vol_{d_1+d_2}([\bm Q_S, \bm Q_C])>0$ is a constant related with $\bm{H}_S$ and $\bm{H}_C$, $\bm Q_S$, $\bm{Q}_C$ are the orthogonal bases of $\bm H_S$ and $\bm H_C$, respectively, and $\bm{P}_S^{\perp}$ is the projection matrix onto $\bm{H}_S^{\perp}$, $s_{k}(\bm A)$ for matrix $\bm A \in \mathbb{R}^{n \times n}$ is defined as:
\begin{equation}\label{ESF}
s_{k}(\bm A) := \sum_{1 \leq i_1 \leq \cdots \leq i_k \leq n} \sigma_{i_1}\cdots \sigma_{i_k}, 1\leq k \leq n.
\end{equation}	
\end{Theorem}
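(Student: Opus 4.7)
The plan is to reduce the test statistic to a single Gram determinant whose value is controlled by the principal angles between the estimated subspace $\Span(\hat{\bm Q}^{(m)})$ and the known target basis $\bm Q_S$, and then translate a non-asymptotic concentration bound on the sample covariance $\hat{\mathrm R}^{(m)}$ into a bound on that subspace distance via a first-order sample-eigenvector perturbation expansion. The starting point is the Schur-complement identity
\begin{equation}
T(\bm R^{(m)})^2 = \det\bigl(\mathrm I_{d_2}-\bm Q_S^{\mathrm T}\hat{\bm Q}^{(m)}\hat{\bm Q}^{(m)\mathrm T}\bm Q_S\bigr) = \det\bigl(\bm Q_S^{\mathrm T}\hat{\bm P}_{SC}^{\perp}\bm Q_S\bigr) = \prod_{j=1}^{d_2}\sin^2\theta_j,
\end{equation}
valid because $\hat{\bm Q}^{(m)}$ and $\bm Q_S$ both have orthonormal columns; here $\hat{\bm P}_{SC}^{\perp}=\mathrm I_n-\hat{\bm Q}^{(m)}\hat{\bm Q}^{(m)\mathrm T}$ and the $\theta_j$ are the principal angles between $\bm H_S$ and $\Span(\hat{\bm Q}^{(m)})$. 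The whole problem is thereby reduced to estimating $\|\sin\Theta(\hat{\bm Q}^{(m)},\bm H_S\oplus\bm H_C)\|$ under $H_1$ and $\|\sin\Theta(\hat{\bm Q}^{(m)},\bm H_C)\|$ under $H_0$.

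Next I would invoke a matrix-Bernstein / Wishart concentration inequality for the Gaussian sample covariance to obtain $\|\hat{\mathrm R}^{(m)}-\mathrm R_{\bm y}\|$ small with probability $1-\exp\{-c\,n\,d\,\varepsilon^2\}$, where $d=d_1+d_2$ under $H_1$ and $d=d_1$ under $H_0$. Feeding this into the Wielandt-style first-order expansion of each sample eigenvector about its population counterpart, in the spirit of Stoica--Nehorai~\cite{stoica1989music} and Jeffries--Farrier~\cite{jeffries1985asymptotic}, produces exactly the two variance sums appearing in \eqref{Res8}--\eqref{ResH0m}: the signal--signal couplings $\lambda_i\lambda_j/(\lambda_i-\lambda_j)^2$ among the $d_1+d_2$ (or $d_1$) large eigenvalues and the signal--noise couplings $\lambda_i\sigma^2/(\sigma^2-\lambda_i)^2$ with multiplicity $n-d_1-d_2$ (or $n-d_1$) coming from the degenerate noise eigenvalue. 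Matching the quadratic second-order term of that expansion to the target $\delta$ explains the prefactor $(\sqrt{\delta+1}-1)^{-2}$, while the $1+\varepsilon$ slack enters through the Bernstein tail, yielding \eqref{prob1} and \eqref{prob2}.

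With $\|\sin\Theta\|^2\le\delta$ in hand, the two conclusions follow by direct bookkeeping. Under $H_1$ every unit $\bm v\in\bm H_S$ already lies in $\bm H_S\oplus\bm H_C$, so $\|\hat{\bm P}_{SC}^{\perp}\bm v\|=\|\hat{\bm P}_{SC}^{\perp}\bm P_{SC}\bm v\|\le\|\sin\Theta\|\le\sqrt\delta$; hence each of the $d_2$ factors in $\prod_j\sin^2\theta_j$ is bounded by $\delta$, giving $T^2\le\delta^{d_2}$ up to the higher-order residue of the perturbation expansion, i.e., \eqref{Res7}. Under $H_0$, $\hat{\bm Q}^{(m)}$ approximates $\bm Q_C$ and the population value of the statistic is $\tau^2=\det(\bm Q_S^{\mathrm T}\bm P_C^{\perp}\bm Q_S)=\det(\bm Q_C^{\mathrm T}\bm P_S^{\perp}\bm Q_C)>0$; Taylor-expanding $f(\bm P):=\det(\bm Q_S^{\mathrm T}(\mathrm I-\bm P)\bm Q_S)$ about $\bm P=\bm P_C$ via Jacobi's formula $\mathrm d\log\det(A)=\Trace(A^{-1}\mathrm dA)$ identifies the first-order coefficient with the elementary symmetric polynomial $s_{d_1-1}(\bm Q_C^{\mathrm T}\bm P_S^{\perp}\bm Q_C)$, yielding \eqref{ResH0}.

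The main obstacle will be pinning down the \emph{sharp} non-asymptotic bound on $\|\sin\Theta\|^2$ whose leading term coincides with the classical asymptotic variance of MUSIC-type subspace estimators, rather than a looser Davis--Kahan-only estimate: a crude Davis--Kahan argument inflates the right-hand side by factors depending only on the minimum spectral gap and does not reproduce the precise combination of $\lambda_i\lambda_j/(\lambda_i-\lambda_j)^2$ and $\lambda_i\sigma^2/(\sigma^2-\lambda_i)^2$ demanded by \eqref{Res8}. Recovering this structure non-asymptotically requires a careful second-order eigenvector expansion together with a Hanson--Wright-type concentration bound for the resulting quadratic forms in the Gaussian entries of $\hat{\mathrm R}^{(m)}-\mathrm R_{\bm y}$, plus bookkeeping that splits the two spectral regimes (population eigenvalues $\lambda_j\neq\sigma^2$ versus the repeated noise eigenvalue $\lambda_j=\sigma^2$) to generate the two distinct summations.
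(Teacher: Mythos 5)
Your outline shares the paper's overall architecture --- reduce $T(\bm R^{(m)})^2$ to a Gram determinant involving the estimated signal-plus-clutter eigenbasis and the known $\bm Q_S$, then control the eigenvector estimation error and propagate it through a determinant perturbation --- but the two executions differ in both halves. For the determinant step you take the $d_2\times d_2$ Schur complement $\det(\bm Q_S^{\rm T}\hat{\bm P}_{SC}^{\perp}\bm Q_S)=\prod_{j=1}^{d_2}\sin^2\theta_j$ and bound each angle factor by $\delta$ separately; the paper instead works with the dual $k_m\times k_m$ Gram matrix $\bm A=(\bm Q_Y^{(m)})^{\rm T}\bm P_S^{\perp}\bm Q_Y^{(m)}$, proves that under $H_1$ it has rank $k_m-k_m^S$ with all nonzero singular values equal to $1$, and then invokes the Ipsen--Rehman determinant perturbation bound (Lemma \ref{perturb}), whose rank-deficient case \eqref{H1perturb} delivers the $\delta^{d_2}$ leading order and whose full-rank case \eqref{H0perturb} delivers the $s_{d_1-1}(\bm Q_C^{\rm T}\bm P_S^{\perp}\bm Q_C)\delta$ coefficient under $H_0$ as a genuine inequality rather than the first-order Jacobi expansion you sketch. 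Your per-angle argument and your Jacobi-formula identification of the $H_0$ coefficient are both plausible and would likely yield the same leading terms, so this half is a legitimate alternative route.

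The genuine gap is in the other half, and you have in effect named it yourself: you defer the ``sharp non-asymptotic bound on $\|\sin\Theta\|^2$ with the exact Stoica--Nehorai variance structure'' to a future careful second-order expansion plus Hanson--Wright argument, which is precisely the step that produces \eqref{Res8}, \eqref{ResH0m}, the $(\sqrt{\delta+1}-1)^{-2}$ prefactor, and the tail probabilities \eqref{prob1}--\eqref{prob2}; without it the theorem's quantitative content is not established. The paper does not actually solve this harder problem either: it simply \emph{adopts} the asymptotic Gaussian distribution of the sample eigenvectors from Stoica--Nehorai (Lemma \ref{stoica}) as an exact perturbation model $\hat{\bm Q}^{(m)}=\bm Q_Y^{(m)}+\bm E^{(m)}$ with $\bm e_i^{(m)}\sim\mathcal N(0,\bm\Sigma_i^{(m)})$, applies a $\chi^2$-type concentration bound to $\|\bm P_S^{\perp}\bm E^{(m)}\|_F^2$ (Lemma \ref{LemmaRMF}) to get $\|\bm P_S^{\perp}\bm E^{(m)}\|_F\le\sqrt{\delta+1}-1$ with probability $1-\exp\{-k_m n\varepsilon^2/C\}$, and then observes $\|\bm E\|_2\le 2(\sqrt{\delta+1}-1)+(\sqrt{\delta+1}-1)^2=\delta$. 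The traces $\Trace(\bm\Sigma_i^{(m)})$ are exactly where the sums of $\lambda_i\lambda_j/(\lambda_i-\lambda_j)^2$ and $(n-k_m)\lambda_i\sigma^2/(\sigma^2-\lambda_i)^2$ enter. So to close your proof you should either import that Gaussian perturbation model as the paper does (accepting that the ``non-asymptotic'' claim is then conditional on an asymptotic distributional approximation), or genuinely carry out the non-asymptotic eigenvector analysis you describe --- the latter is substantially harder than anything in the paper and is not needed to reproduce its result.
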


\par
Theorem \ref{H1Thm1N} describes the performance of our VC subspace detector in noisy environment. The main result (\ref{Res7}), together with (\ref{Res8}) and (\ref{prob1}), implies that when the target signal is present, the test quantity $1/T(\bm R^{(m)})$ of VC subspace detector will tend to infinity with an overwhelming probability, when the number of sample data is sufficient large. On the other hand, (\ref{ResH0}) together with (\ref{ResH0m}) and (\ref{prob2}) ensures $1/T(\bm R^{(m)})$ to tend to a finite value. Therefore the decision point of this detector is to test whether $1/T(\bm R^{(m)})$ increases over a threshold, or stops increasing at a finite value. The result of Theorem \ref{H1Thm1N} implies that the output of our VC subspace detector will remain almost unchanged no matter what clutter is given, whether or not there is noise, so far as that we have enough sample data. This shows the asymptotic effectiveness of our VC subspace detector.

\par
The effectiveness of detector 1 was demonstrated by numerical simulation in figure \ref{figure4}. Here $n=1024$, $d_1=40,d_2=10$, $\text{SNR}=-10\text{dB}$ and the target and clutter signal were chosen randomly from corresponding subspaces. The average values of 100 monte-carlo simulations of the volume correlation $1/T(\bm{R}^{(m)})$ with respect to different $m$ are plotted, and the values of the detector output with respect to each simulation are showed by a scatter diagram in the small sub-figures. It can be seen from the figures that as $m$ increases, the test quantity $1/T(\bm{R}^{(m)})$ converges to infinity when there is target signal, while there is no target signal, $1/T(\bm{R}^{(m)})$ converges to a finite value. Therefore, as a whole, the simulation result verified the validity of VC subspace detector.

\begin{figure}[htbp]
	\centering
	\includegraphics[width=0.49\textwidth]{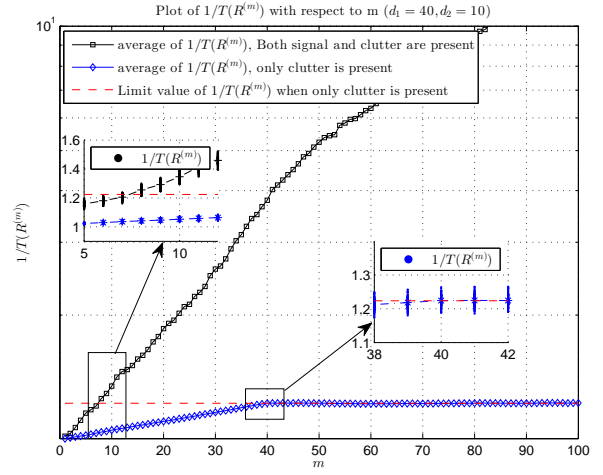}
	\caption{The reciprocal of volume correlation $1/T(\bm R^{(m)})$}
	\label{figure4}
\end{figure}

\section{conclusion}
In this paper, we propose a novel subspace detector that can detect target signal buried in low-rank clutters and random noise without knowledge of clutter subspace. The proposed detector utilize the geometrical relation of clutter and target subspaces. The target signal could be detected without prior learning of clutter structure with our detector. The influence of clutter will be eliminated simultaneously with the detection being performed. Theoretical and numerical analysis has validated the effectiveness of our new detector.

\section*{Acknowledgment}

 This paper is supported by project 61201356 from National Natural Science Foundation of China.



\newpage

\appendices
\small

\section{Proof of Theorem \ref{H1Thm1N}}
\par In order to prove Theorem \ref{H1Thm1N}, two theorems as intermediate results are needed.
\begin{Theorem}\label{H0Thm1}
	Let $\bm H$ be a $n$-dimensional Hilbert space, $\bm{H}_S$ and $\bm{H}_C$ be the subspaces of $H$ corresponding to target and clutter respectively. $\bm{H}_S\cap\bm{H}_C=\{0\}$, $\Dim{\bm{H}_C}=d_1$, $\Dim{\bm{H}_S}=d_2$, Suppose $\bm{y}_i, i=1,\cdots,$ be sampled data either containing both target and clutter,
	\begin{equation}
	\bm{y}_i = \bm{x}^{(S)}_i+\bm{x}^{(C)}_i,\qquad i = 1,2,\cdots
	\end{equation}
	or only containing clutter
	\begin{equation}
	\bm{y}_i = \bm{x}^{(C)}_i,\qquad i = 1,2,\cdots
	\end{equation}	
	where $\bm{x}^{(S)}_i\in\bm{H}_S$ and $\bm{x}^{(C)}_i\in\bm{H}_C$. Let
	\begin{equation}
	\bm{Y}^{(m)}:=[\bm{y}^{(1)},\cdots,\bm{y}^{(m)}],
	\end{equation}
	and $\bm{Q}_{\bm{Y}}^{(m)}$ and $\bm{Q}_S$ be orthogonal matrix with columns being the basis vectors of $\Span(\bm{Y}^{(m)})$ and $\bm{H}_S$, then we have the following monotone property
	\begin{equation}\label{Monotone}
	1/T(\bm{Y}^{(1)})\leq 1/T(\bm{Y}^{(2)})\leq\cdots\leq 1/T(\bm{Y}^{(d_1)})
	\end{equation}
	holds for both scenarios, where 	$T(\bm{Y}^{(m)})=\Vol_{m+d_2}([\bm{Q}_{\bm{Y}}^{(m)},\bm{Q}_S])$.
	
\end{Theorem}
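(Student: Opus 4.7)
The plan is to show that $T(\bm{Y}^{(m+1)}) \leq T(\bm{Y}^{(m)})$ for each $m$, which is equivalent to the claim. The first observation is that $\bm{Q}_{\bm Y}^{(m)}$ and $\bm{Q}_S$ have orthonormal columns, so $\Vol(\bm{Q}_{\bm Y}^{(m)}) = \Vol(\bm{Q}_S) = 1$; by (\ref{CorrVol}) this lets us identify $T(\bm{Y}^{(m)})$ with the volume correlation $\Corr_{\textrm{vol}}(\Span(\bm{Y}^{(m)}), \bm{H}_S)$. The theorem therefore reduces to the geometric statement that enlarging the sample subspace cannot increase its volume correlation with $\bm H_S$.

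Next I would recast $T(\bm{Y}^{(m)})^2$ as a determinant using (\ref{VolumeDef2}). Setting $A_m = (\bm{Q}_{\bm Y}^{(m)})^T \bm{Q}_S$, the Gram matrix of $[\bm{Q}_{\bm Y}^{(m)}, \bm{Q}_S]$ is the block matrix $\begin{pmatrix} I_m & A_m \\ A_m^T & I_{d_2} \end{pmatrix}$, and the Schur-complement identity yields
\begin{equation}
T(\bm{Y}^{(m)})^2 \;=\; \det\!\bigl(I_{d_2} - A_m^T A_m\bigr) \;=\; \det\!\bigl(I_{d_2} - \bm{Q}_S^T \bm{P}_Y^{(m)} \bm{Q}_S\bigr),\nonumber
\end{equation}
where $\bm{P}_Y^{(m)} = \bm{Q}_{\bm Y}^{(m)}(\bm{Q}_{\bm Y}^{(m)})^T$ is the orthogonal projection onto $\Span(\bm{Y}^{(m)})$. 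This reformulation is the crux of the argument: it converts a volume quantity into a determinant of an operator that depends monotonically on $m$.

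The last step is to exploit this monotonicity. Since $\Span(\bm{Y}^{(m)}) \subseteq \Span(\bm{Y}^{(m+1)})$ by construction, the projections obey $\bm{P}_Y^{(m)} \preceq \bm{P}_Y^{(m+1)}$ in the L\"owner order; conjugation by $\bm{Q}_S$ preserves the order and subtraction from $I_{d_2}$ reverses it, giving
\begin{equation}
0 \preceq I_{d_2} - \bm{Q}_S^T \bm{P}_Y^{(m+1)} \bm{Q}_S \preceq I_{d_2} - \bm{Q}_S^T \bm{P}_Y^{(m)} \bm{Q}_S \preceq I_{d_2}.\nonumber
\end{equation}
Both sides are positive semidefinite (their eigenvalues are squared sines of the principal angles between $\Span(\bm{Y}^{(\cdot)})$ and $\bm{H}_S$, hence in $[0,1]$), so monotonicity of the determinant on the PSD cone yields $T(\bm{Y}^{(m+1)})^2 \leq T(\bm{Y}^{(m)})^2$ and hence (\ref{Monotone}). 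Crucially, the argument never uses whether $\bm{y}_i$ contains a target component, so it covers both scenarios of the theorem simultaneously.

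I do not expect a serious obstacle here; the only point that requires care is that the Gram matrix above truly has size $(m+d_2)\times(m+d_2)$, which needs $\Span(\bm{Y}^{(m)})$ to have dimension $m$ and to meet $\bm{H}_S$ only at $0$. For $m\le d_1$ this is the generic situation already invoked in Section III (and in the degenerate case both sides of (\ref{Monotone}) collapse to $\infty$ via Lemma \ref{lemma2}, so the inequality is trivial), so no additional hypothesis beyond what is already assumed is needed.
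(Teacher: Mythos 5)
Your proof is correct, but it takes a different route from the paper's. The paper first establishes a one-column recursion (its Lemma \ref{Lemma1}, proved via the Schur complement): appending one new orthonormal direction $\bm{q}_{\bm Y}^{(m)}$ multiplies the volume by the factor $\|\bm{P}_{[\bm{Q}_S,\bm{Q}_{\bm Y}^{(m-1)}]}^{\perp}\bm{q}_{\bm Y}^{(m)}\|\leq\|\bm{q}_{\bm Y}^{(m)}\|=1$, so $T(\bm{Y}^{(m)})\leq T(\bm{Y}^{(m-1)})$ follows step by step. You instead take the Schur complement against the whole $I_m$ block at once to get the closed form $T(\bm{Y}^{(m)})^2=\det\bigl(I_{d_2}-\bm{Q}_S^T\bm{P}_Y^{(m)}\bm{Q}_S\bigr)$, and then invoke the L\"owner monotonicity of projections onto nested subspaces together with determinant monotonicity on the PSD cone. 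Both hinge on the same block-determinant identity, but your argument is "global" where the paper's is recursive. What the paper's version buys is an explicit multiplicative decrement at each step (the projected-residual norm), which makes the source of the decrease concrete and is reused in spirit elsewhere; what yours buys is a cleaner one-line identity tying $T(\bm{Y}^{(m)})^2$ to the principal angles (consistent with (\ref{VolAng})) and an argument that handles the rank-deficient case uniformly, since $\det\bigl(I_{d_2}-\bm{Q}_S^T\bm{P}_Y^{(m)}\bm{Q}_S\bigr)$ remains valid and the PSD ordering forces all subsequent values to stay at zero. Your closing remark about the generic-dimension caveat is apt; the paper makes the same implicit genericity assumption without comment.
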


\begin{Theorem}\label{H0Thm2}
	Under the same assumption of theorem \ref{H0Thm1}, the sufficient and necessary condition for existence of target signal in sampled data is that: there exits an integer $k$, such that
	\begin{equation}
	T(\bm{Y}^{(m)})=\Vol_{m+d_2}([\bm{Q}_{\bm{Y}}^{(m)},\bm{Q}_S])=0,\label{corr1}
	\end{equation}
	for all $m > k$, and $k=d_1$. Or equivalently, the sufficient and necessary condition for non-existence of target signal in sampled data is that:
	\begin{equation}
	T(\bm{Y}^{(m)})=\Vol_{m+d_2}([\bm{Q}_{\bm{Y}}^{(m)},\bm{Q}_S])>0,\label{corr1}
	\end{equation}	
	for all $m > 1$.
\end{Theorem}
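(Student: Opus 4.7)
The plan is to derive both directions of the equivalence by dimension counting combined with Lemma \ref{lemma2}, and to pin the breakpoint exactly at $k=d_1$ by a generic-position argument. All of the geometric intuition has already been laid out in Section III; the job is to turn it into a crisp sufficient-and-necessary statement.

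For the forward direction in the existence case, every sample $\bm y_i=\bm x_i^{(S)}+\bm x_i^{(C)}$ lies inside $\bm H_S\oplus\bm H_C$, a subspace of dimension $d_1+d_2$. Hence the column space of $[\bm Q_{\bm Y}^{(m)},\bm Q_S]$ is contained in $\bm H_S\oplus\bm H_C$ for every $m$. Once $m>d_1$ the matrix carries $m+d_2>d_1+d_2$ columns but its column space has dimension at most $d_1+d_2$, so by Lemma \ref{lemma2} we obtain $T(\bm Y^{(m)})=0$, giving the desired vanishing. In the non-existence case, $\Span(\bm Y^{(m)})\subseteq\bm H_C$, and because $\bm H_S\cap\bm H_C=\{0\}$, any orthonormal basis of $\Span(\bm Y^{(m)})$ together with $\bm Q_S$ is linearly independent; applying (\ref{VolumeDef2}) then yields $T(\bm Y^{(m)})>0$. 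These two observations, combined with exhaustion of cases, already give the equivalence up to identifying the critical $k$.

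The genuinely non-trivial step is to show that the breakpoint in the existence case is exactly $k=d_1$, i.e.\ $T(\bm Y^{(d_1)})>0$. Combined with the monotonicity of Theorem \ref{H0Thm1} and the vanishing for $m>d_1$ already established, this locks down $k=d_1$. I would argue that for a generic realization the clutter components $\bm x_1^{(C)},\dots,\bm x_{d_1}^{(C)}$ span $\bm H_C$: any null combination $\sum\alpha_i\bm y_i\in\bm H_S$ must project to zero in $\bm H_C$, forcing $\sum\alpha_i\bm x_i^{(C)}=0$ and hence $\alpha_i=0$. Thus $\Span(\bm Y^{(d_1)})+\bm H_S=\bm H_S\oplus\bm H_C$, and the $d_1+d_2$ columns of $[\bm Q_{\bm Y}^{(d_1)},\bm Q_S]$ form a basis of a $(d_1+d_2)$-dimensional subspace, yielding a strictly positive volume by (\ref{VolumeDef2}). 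The principal obstacle is formalizing \emph{generic}: one has to exclude accidental alignments in which, although the target is physically present, the sample realizations conspire so that $\Span(\bm Y^{(d_1)})$ intersects $\bm H_S$ nontrivially or fails to span $\bm H_C$ modulo $\bm H_S$. I would discharge this by invoking the absolutely-continuous random-sampling model implicit in the paper and observing that the rank-deficient set is the zero locus of a fixed polynomial on $(\bm H_S\oplus\bm H_C)^{d_1}$, hence a proper algebraic subvariety of Lebesgue measure zero.
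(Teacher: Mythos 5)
Your proof follows essentially the same route as the paper's: the vanishing for $m>d_1$ under the target hypothesis comes from counting columns of $[\bm{Q}_{\bm{Y}}^{(m)},\bm{Q}_S]$ against $\Dim(\bm{H}_S\oplus\bm{H}_C)=d_1+d_2$ and invoking Lemma \ref{lemma2}, and the converse is the same contrapositive argument (no target $\Rightarrow$ all samples lie in $\bm{H}_C$ $\Rightarrow$ the volume stays positive because $\bm{H}_S\cap\bm{H}_C=\{0\}$), with your direct Gram-determinant positivity via (\ref{VolumeDef2}) standing in for the paper's product-of-sines-of-principal-angles identity (\ref{VolAng}). The one place you go beyond the paper is in explicitly locating the breakpoint at $k=d_1$: you prove $T(\bm{Y}^{(d_1)})>0$ from the generic linear independence of the clutter components $\bm{x}_1^{(C)},\dots,\bm{x}_{d_1}^{(C)}$, whereas the paper only establishes vanishing at $m=d_1+1$ (and implicitly leans on the monotonicity of Theorem \ref{H0Thm1} for larger $m$) without verifying that the volume is still strictly positive one step earlier; your extra step is therefore a genuine completion rather than a redundancy. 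Likewise, your formalization of ``generic'' as the complement of a proper algebraic subvariety of measure zero makes precise a rank assertion the paper simply postulates as the generic hypothesis.
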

\subsection{Proof of Theorem \ref{H0Thm1}}

We will prove the monotone property (\ref{Monotone}). An useful Lemma is proved at first. It is just a simple property with very clear geometric intuition for volume of subspaces.

\begin{Lemma}\label{Lemma1}
	Let $\bm{Y}^{(k)}=[\bm{y}_1,\cdots,\bm{y}_k]\in\mathbb{R}^{n\times{k}}$ and $\bm{X}\in\mathbb{R}^{n\times{l}}$ be two matrices, then
	\begin{equation}
	\Vol_{k+l}([\bm{X},\bm{Y}^{(k)}])=\Vol_{k-1+l}([\bm{X},\bm{Y}^{(k-1)}])\|\bm{P}_{[\bm{X},\bm{Y}^{(k-1)}]}^{\perp}\bm{y}_k\|
	\end{equation}
	where $\bm{P}_{\bm{A}}^{\perp}$ is the orthogonal complement of projection matrix on column space of matrix $\bm{A}$.
\end{Lemma}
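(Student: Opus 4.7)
The plan is to rewrite both sides in terms of Gram determinants and reduce the identity to a standard Schur-complement computation, which is the natural way to handle a determinant of a block Gram matrix whose last column is a single vector.

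First, I would apply the equivalent volume definition (\ref{VolumeDef2}) to both sides, so that the identity to prove becomes
\begin{equation*}
\det\bigl([\bm X,\bm Y^{(k)}]^{T}[\bm X,\bm Y^{(k)}]\bigr)
= \det\bigl([\bm X,\bm Y^{(k-1)}]^{T}[\bm X,\bm Y^{(k-1)}]\bigr)\cdot\|\bm P_{[\bm X,\bm Y^{(k-1)}]}^{\perp}\bm y_k\|^{2}.
\end{equation*}
Setting $\bm A:=[\bm X,\bm Y^{(k-1)}]$, the left-hand Gram matrix has the $2\times 2$ block form
\begin{equation*}
\begin{pmatrix} \bm A^{T}\bm A & \bm A^{T}\bm y_k \\ \bm y_k^{T}\bm A & \bm y_k^{T}\bm y_k \end{pmatrix}.
\end{equation*}

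Next, assuming $\bm A$ has full column rank, I would invoke the Schur complement formula for block determinants:
\begin{equation*}
\det\!\begin{pmatrix} \bm A^{T}\bm A & \bm A^{T}\bm y_k \\ \bm y_k^{T}\bm A & \bm y_k^{T}\bm y_k \end{pmatrix}
= \det(\bm A^{T}\bm A)\cdot\bigl(\bm y_k^{T}\bm y_k-\bm y_k^{T}\bm A(\bm A^{T}\bm A)^{-1}\bm A^{T}\bm y_k\bigr).
\end{equation*}
The key geometric identification is that $\bm A(\bm A^{T}\bm A)^{-1}\bm A^{T}$ is exactly the orthogonal projection $\bm P_{\bm A}$ onto the column space of $\bm A$, so the scalar Schur complement simplifies to $\bm y_k^{T}(\bm I-\bm P_{\bm A})\bm y_k=\|\bm P_{\bm A}^{\perp}\bm y_k\|^{2}$. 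Taking square roots and reinserting the volume notation yields the claim.

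The only mildly delicate point, and the one I would write out carefully, is the rank-deficient case: if $\bm A$ does not have full column rank, then $(\bm A^{T}\bm A)^{-1}$ does not exist, so the Schur-complement argument must be justified separately. Here I would note that $\det(\bm A^{T}\bm A)=0$, so by Lemma \ref{lemma2} the right-hand side vanishes; and the column space of $[\bm A,\bm y_k]$ has dimension at most that of $\bm A$ plus one, so its Gram determinant is likewise zero, matching the left-hand side. (Alternatively, one may replace the inverse by the Moore–Penrose pseudoinverse and observe that $\bm A\bm A^{+}$ is still the orthogonal projector onto $\mathrm{col}(\bm A)$, so the same computation applies verbatim.) I do not anticipate any genuine obstacle—the content of the lemma is the classical base-times-height formula for parallelotope volume, and the Schur-complement identity is the cleanest route to it.
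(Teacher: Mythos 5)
Your proposal is correct and follows essentially the same route as the paper's own proof: both rewrite the volumes as Gram determinants, apply the Schur complement formula to the block Gram matrix $[\bm X,\bm Y^{(k-1)},\bm y_k]^T[\bm X,\bm Y^{(k-1)},\bm y_k]$, and identify $\bm A(\bm A^T\bm A)^{-1}\bm A^T$ with the orthogonal projector onto the column space. Your explicit handling of the rank-deficient case is a small refinement the paper omits, but it does not change the argument.
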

\begin{proof}
	According to the definition of volume for subspace (\ref{VolumeDef2}), we have
	\begin{align}
	\Vol_{k+l}^2([\bm{X},\bm{Y}^{(k)}])\hspace{6.5cm}\nonumber\\ =\det([\bm{X},\bm{Y}^{(k)}]^{T}[\bm{X},\bm{Y}^{(k)}])\hspace{4.55cm}\nonumber\\
	=\det\left(
	\begin{array}{cc}
	\bm{X}^{T}\bm{X}&\bm{X}^{T}\bm{Y}^{(k)}\\
	(\bm{Y}^{(k)})^T\bm{X}&(\bm{Y}^{(k)})^T\bm{Y}^{(k)}
	\end{array}
	\right)\hspace{3.2cm}\nonumber\\
	=\det\left(
	\begin{array}{cc:c}
	\bm{X}^{T}\bm{X}&\bm{X}^{T}\bm{Y}^{(k-1)}&\bm{X}^{T}\bm{y}^{(k)}\\
	(\bm{Y}^{(k-1)})^T\bm{X}&(\bm{Y}^{(k-1)})^T\bm{Y}^{(k-1)}&(\bm{Y}^{(k-1)})^T\bm{y}_k\\
	\hdashline
	\bm{y}_k^T\bm{X}&\bm{y}_k^T\bm{Y}^{(k-1)}&\bm{y}_k^T\bm{y}_k\\
	\end{array}
	\right)
	\end{align}
	Using Schur complement formula,
	\begin{equation}
	\det\left(
	\begin{array}{cc}
	\bm{A}&\bm{B}\\
	\bm{C}&\bm{D}
	\end{array}
	\right)
	=\det(\bm{A})\det(\bm{D}-\bm{B}\bm{A}^{-1}\bm{C}),
	\end{equation}
	we obtain (in the next page)
	\newcounter{mytempeqncnt}
	\begin{figure*}[!t]
		\normalsize
		\begin{small}
			\begin{align}
			\Vol_{k+l}^2([\bm{X},\bm{Y}^{(k)}])
			=\det\left(
			\begin{array}{cc}
			\bm{X}^{T}\bm{X}&\bm{X}^{T}\bm{Y}^{(k-1)}\\
			(\bm{Y}^{(k-1)})^T\bm{X}&(\bm{Y}^{(k-1)})^T\bm{Y}^{(k-1)}
			\end{array}
			\right)\cdot \hspace{8cm}
			\nonumber \\
			\det\left(
			\bm{y}_k^T\bm{y}_k-
			\bm{y}_k^T[\bm{X},\bm{Y}^{(k-1)}]
			\left(
			\begin{array}{cc}
			\bm{X}^{T}\bm{X}&\bm{X}^{T}\bm{Y}^{(k-1)}\\
			(\bm{Y}^{(k-1)})^T\bm{X}&(\bm{Y}^{(k-1)})^T\bm{Y}^{(k-1)}
			\end{array}
			\right)^{-1}
			[\bm{X},\bm{Y}^{(k-1)}]^T\bm{y}_k
			\right) \nonumber \\
			=\Vol_{k-1+l}^2([\bm{X},\bm{Y}^{(k-1)}])\bm{y}_k^{T}(\bm{I}-\bm{P}_{[\bm{X},\bm{Y}^{(k-1)}]})\bm{y}_k, \hspace{7.6cm}
			\end{align}
		\end{small}
		
	\end{figure*}
	
	
	Because $\bm{I}-\bm{P}_{[\bm{X},\bm{Y}^{(k-1)}]}$ is idempotent matrix,
	\begin{equation}
	(\bm{I}-\bm{P}_{[\bm{X},\bm{Y}^{(k-1)}]})^2=\bm{I}-\bm{P}_{[\bm{X},\bm{Y}^{(k-1)}]},
	\end{equation}
	we obtain
	\begin{align}
	\Vol_{k+l}^2([\bm{X},\bm{Y}^{(k)}])
	&=\Vol_{k-1+l}^2([\bm{X},\bm{Y}^{(k-1)}])\bm{y}_k^{T}(\bm{I}-\bm{P}_{[\bm{X},\bm{Y}^{(k-1)}]})^2\bm{y}_k\nonumber\\
	&=\Vol_{k-1+l}^2([\bm{X},\bm{Y}^{(k-1)}])\|\bm{P}_{[\bm{X},\bm{Y}^{(k-1)}]}^{\perp}\bm{y}_k\|^2
	\end{align}
	This is just what we want to prove.
\end{proof}
\par

With the notations in theorem \ref{H0Thm1}, when $m<\Dim(\bm{H}_S\oplus\bm{H}_C)$, we have
\begin{equation}
T(\bm{Y}^{(m)})=\Vol_{d_2+m}([\bm{Q}_{S}, \bm{Q}_{\bm{Y}}^{(m)}]),
\end{equation}
where $\bm{Q}_{\bm{Y}}^{(m)}=[\bm{q}_{\bm{Y}}^{(1)},\bm{q}_{\bm{Y}}^{(2)},\cdots,\bm{q}_{\bm{Y}}^{(m)}]$ is matrix with columns being orthogonal basis vectors for $\bm{Y}^{(m)}$. Moreover,
\begin{equation}
\bm{Q}_{\bm Y^{(m)}}=[\bm{Q}_{\bm Y^{(m-1)}}, \bm{q}_{\bm{Y}}^{(m)}],
\end{equation}
we have
\begin{align}
\bm{q}_{\bm{Y}}^{(m)}
&=\frac{(\bm{I}-\bm{Q}_{\bm{Y}}^{(m-1)}(\bm{Q}_{\bm{Y}}^{(m-1)})^T)\bm{y}_{m}}
{\|(\bm{I}-\bm{Q}_{\bm{Y}}^{(m-1)}(\bm{Q}_{\bm{Y}}^{(m-1)})^T)\bm{y}_{m}\|}\nonumber\\
&=\bm{P}_{\bm{Q}_{\bm{Y}}^{(m-1)}}^{\perp}\bm{y}_{m}\nonumber\\
&=\bm{P}_{\bm{Y}^{(m-1)}}^{\perp}\bm{y}_{m}.
\end{align}
Using Lemma \ref{Lemma1}, Let $\bm{X}=\bm{Q}_S$, $\bm{Y}=\bm{Q}_{\bm{Y}}^{(m)}$, we obtain
\begin{align}
T(\bm{Y}^{(m)})&=\Vol_{d_2+m-1}([\bm{Q}_{S},\bm{Q}_{\bm{Y}}^{(m-1)}])\|\bm{P}_{[\bm{Q}_S,\bm{Q}_{\bm{Y}}^{(m-1)}]}^{\perp}\bm{q}_{\bm{Y}}^{(m)}\|\\
&=T(\bm{Y}^{(m-1)})\|\bm{P}_{[\bm{Q}_S,\bm{Q}_{\bm{Y}}^{(m-1)}]}^{\perp}\bm{q}_{\bm{Y}}^{(m)}\|.
\end{align}
Take into account the property of projection matrices,
\begin{equation}
\|\bm{P}_{[\bm{Q}_S,\bm{Q}_{\bm{Y}}^{(m-1)}]}^{\perp}\bm{q}_{\bm{Y}}^{(m)}\|\leq\|\bm{q}_{\bm{Y}}^{(m)}\|=1,
\end{equation}
we have
\begin{equation}
T(\bm{Y}^{(m)})\leq{T(\bm{Y}^{(m-1)})}.
\end{equation}
Thus (\ref{Monotone}) and theorem \ref{H0Thm1} has been proven.

\subsection{Proof of Theorem \ref{H0Thm2}}

Let $\bm{H}_S$ and $\bm{H}_C$ be target subspace and clutter subspace respectively. $\Dim(\bm{H}_S)=d_2$, $\Dim(\bm{H}_C)=d_1$. Assume there exists target signal in received data $\{\bm{y}_i, i=1,\cdots,m\}$, that is to say,
\begin{equation}
\bm{y}_i=\bm{H}_S\bm\alpha_i+\bm{H}_C\bm\beta_i,
\end{equation}
for some $i\in{S}\subset\{1,2,\cdots,m\}$, and
\begin{equation}
\bm{y}_i=\bm{H}_C\bm\beta^{(i)},
\end{equation}
for other $i\in\{1,2,\cdots,m\}\setminus{S}$. Then under the generic hypothesis, we have
\begin{equation}
\Rank([\bm{y}_1,\cdots,\bm{y}_m])=m,
\end{equation}
for $m\leq{d_1}$. Hence the result of successive orthogonalization could be written as
\begin{equation}
\bm{Q}_{\bm{Y}}^{(m)}=[\bm{q}_{\bm{Y}}^{(1)},\bm{q}_{\bm{Y}}^{(2)},\cdots,\bm{q}_{\bm{Y}}^{(m)}],
\end{equation}
It should be stressed that in the case of $k=d_1+1$, we still have
\begin{equation}
\Rank([\bm{y}_1,\cdots,\bm{y}_{d_1+1}])=d_1+1,
\end{equation}
because of the presence of target signal. In other words,
\begin{equation}\label{Ortho1}
\bm{Q}_{\bm{Y}}^{(d_1+1)}=[\bm{q}_{\bm{Y}}^{(1)},\bm{q}_{\bm{Y}}^{(2)},\cdots,\bm{q}_{\bm{Y}}^{(d_1)},\bm{q}_{\bm{Y}}^{(d_1+1)}].
\end{equation}
For $\bm{Y}^{(d_1+1)}$, all of its $d_1+1$ linearly independent directions includes two parts, one with $d_1$ directions come from clutter subspace $\bm{H}_C$ and the other one direction is contributed by target subspace $\bm{H}_S$.

Let $\bm{Q}_S$ be the matrix with columns being the orthonormal basis vectors of $\bm{H}_S$,  (\ref{Ortho1}) means that
\begin{equation}
\Rank([\bm{Q}_S,\bm{Q}_{\bm{Y}}^{(d_1+1)}])=d_2+d_1,
\end{equation}
and the number of nonzero columns of $[\bm{Q}_S,\bm{Q}_{\bm{Y}}^{(d_1+1)}]$ is $d_1+d_2+1$. Using Lemma \ref{lemma2}, we obtain
\begin{equation}
\Vol_{d_1+d_2+1}([\bm{Q}_S,\bm{Q}_{\bm{Y}}^{(d_1+1)}])=0,
\end{equation}
Take $k=d_1+1$, the necessary part of theorem has been proved.

On the contrary, under the generic hypothesis, if there exists $k$ such that
\begin{equation}\label{SigEx}
\Vol_{k+d_2}([\bm{Q}_S,\bm{Q}_{\bm{Y}}^{(k)}])=0,
\end{equation}
then there must be target signal in sample data ${\bm{y}_1,\cdots,\bm{y}_k}$.

Assume this was not the case, then each sample $\bm{y}_i$ contained no target signal. we have
\begin{equation}
\bm{Y}^{(k)}=[\bm{y}_1,\cdots,\bm{y}_k]\subset\bm{H}_C,
\end{equation}
therefore the orthonormal basis matrix $\bm{Q}_{\bm{Y}}^{(m)}$ of $\bm{Y}^{{m}}$ satisfied
\begin{equation}
\bm{Q}_{\bm{Y}}^{(m)}=[\bm{q}_{\bm{Y}}^{(1)},\bm{q}_{\bm{Y}}^{(2)},\cdots,\bm{q}_{\bm{Y}}^{(m)}]
\end{equation}
for $m\leq{k_1}$ and
\begin{equation}
\bm{Q}_{\bm{Y}}^{(m)}=[\bm{q}_{\bm{Y}}^{(1)},\bm{q}_{\bm{Y}}^{(2)},\cdots,\bm{q}_{\bm{Y}}^{(k_1)}]
\end{equation}
for $m>k_1$. According to (\ref{CorrVol}) and (\ref{VolAng}), we obtain
\begin{align}
\Vol_{m+k_2}([\bm{Q}_S,\bm{Q}_{\bm{Y}}^{(m)}])
&=\frac{\Vol_{m+k_2}([\bm{Q}_S,\bm{Q}_{\bm{Y}}^{(m)}])}{\Vol_{m}(\bm{Q}_S)\Vol_{k_2}(\bm{Q}_{\bm{Y}}^{(m)})}\nonumber\\
&=\Corr_{\textrm{vol}}(\bm{Q}_S,\bm{Q}_{\bm{Y}}^{(m)})\nonumber\\
&=\prod_{j=1}^{\min(d_1,d_2)}\sin\theta_j(\bm{X}_1, \bm{X}_2) >0.
\end{align}
Considering the monotone relation (\ref{Monotone}), we have
\begin{equation}
\Vol_{m+k_2}([\bm{Q}_S,\bm{Q}_{\bm{Y}}^{(m)}])>0,\quad{\forall m\in\mathbb{N}},
\end{equation}
Contradiction! We have verified the sufficient part and the whole theorem has been proved.

\subsection{Proof of Theorem \ref{H1Thm1N}}

To complete proof of Theorem \ref{H1Thm1N}, several lemmas are required as necessary tools. These lemmas concerned with asymptotical distribution of eigenvectors of sample covariance matrix, concentration bounds and matrix perturbation.

\begin{Lemma}\cite{stoica1989music}\label{stoica}
	Consider the matrix $\bm{\hat{Q}}^{(r)}\in\mathbb{R}^{n\times{r}}$ with columns being the $r$ eigenvectors of sample covariance matrix $\hat{\mathrm{R}}^{(m)}\in\mathbb{R}^{n\times{n}}$ corresponding to the largest $r$ eigenvalues,
	$$
	\bm{\hat{Q}}^{(r)}=[\bm{\hat{q}}_1, \bm{\hat{q}}_2,\cdots,\bm{\hat{q}}_r],
	$$
	its asymptotic distribution (for large $m$) is jointly Gaussian with mean
	$$
	\bm{Q}=[\bm{q}_1,\bm{q}_2,\cdots,\bm{q}_r],
	$$
	and covariance $\bm \Sigma_1^{(m)},\cdots,\bm \Sigma_r^{(m)}$, where
	\begin{equation}
	\bm \Sigma_i^{(m)}:= \frac{\lambda_i}{m}\Big[\sum_{\stackrel{j=1}{j \neq i}}^r\frac{\lambda_j}{(\lambda_i-\lambda_j)^2}\bm{q}_i\bm{q}_i^T + \sum_{j=r+1}^{P}\frac{\sigma^2}{(\sigma^2-\lambda_i)^2}\bm{q}_i\bm{q}_i^T\Big],\nonumber
	\end{equation}
	$i=1,\cdots,r$, and
	\begin{equation}\label{perturbcov}
	\mathbb{E}(\bm{\hat{q}}_i^{(m)}-\bm{q}_i)(\bm{\hat{q}}_k^{(m)}-\bm{q}_k)^T=\bm\Sigma_i^{(m)}\cdot\delta_{i,k},\quad i,k=1,\cdots,r
	\end{equation}
	where $\lambda_1\geq\lambda_2\geq\cdots\lambda_r\geq\lambda_{r+1}=\cdots=\lambda_n=\sigma^2$ are eigenvalues of the covariance matrix $\mathrm{R}_r$ in (\ref{CorrMatrix}), with $\bm{q}_1,\cdots,\bm{q}_n$ the corresponding eigenvectors.
\end{Lemma}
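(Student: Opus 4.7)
The plan is to combine first-order matrix perturbation theory for simple eigenvalues with the multivariate CLT for the sample covariance matrix. I start by writing $\hat{\mathrm R}^{(m)} = \mathrm R_{\bm y} + \bm E^{(m)}$, where the fluctuation $\bm E^{(m)} = \hat{\mathrm R}^{(m)} - \mathrm R_{\bm y}$ is an average of $m$ i.i.d.\ centered matrices $\bm y_i\bm y_i^T - \mathrm R_{\bm y}$. The multivariate CLT gives that $\sqrt{m}\,\bm E^{(m)}$ converges in distribution to a symmetric Gaussian random matrix, and Isserlis' theorem (applied to the zero-mean Gaussian case relevant here) fixes its second moments in the population eigenbasis $\{\bm q_j\}$ as
$$
m\,\mathbb{E}\bigl[(\bm q_a^T \bm E^{(m)} \bm q_b)(\bm q_c^T \bm E^{(m)} \bm q_d)\bigr] \longrightarrow \lambda_a\lambda_b\bigl(\delta_{ac}\delta_{bd}+\delta_{ad}\delta_{bc}\bigr).
$$

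Next I would expand each perturbed eigenvector in the population basis, $\hat{\bm q}_i^{(m)} = \sum_{j} c_{ij}^{(m)} \bm q_j$, with normalization forcing $c_{ii}^{(m)} = 1 + O(\|\bm E^{(m)}\|^2)$. Substituting into $\hat{\mathrm R}^{(m)}\hat{\bm q}_i^{(m)} = \hat\lambda_i^{(m)}\hat{\bm q}_i^{(m)}$ and projecting onto $\bm q_k$ for $k\neq i$ yields the first-order coefficient
$$
c_{ij}^{(m)} = \frac{\bm q_j^T \bm E^{(m)} \bm q_i}{\lambda_i - \lambda_j} + O_P(1/m), \qquad j \neq i,
$$
which is legitimate because the signal--signal gaps $\min_{j\leq r, j\neq i}|\lambda_i-\lambda_j|$ and the signal--noise gap $\min_i|\lambda_i - \sigma^2|$ are strictly positive. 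The noise-subspace directions $\bm q_{r+1},\ldots,\bm q_n$ are individually non-identified (the $\sigma^2$-eigenspace is degenerate), but only the projector $\bm P_N = \sum_{j>r}\bm q_j\bm q_j^T$ enters the final calculation, so any orthonormal basis of the noise eigenspace may be fixed without loss of generality.

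Given these coefficients, I then compute
$$
\mathbb{E}\bigl[(\hat{\bm q}_i^{(m)}-\bm q_i)(\hat{\bm q}_k^{(m)}-\bm q_k)^T\bigr] = \sum_{j\neq i}\sum_{l\neq k} \frac{\mathbb{E}[(\bm q_j^T \bm E \bm q_i)(\bm q_l^T \bm E \bm q_k)]}{(\lambda_i-\lambda_j)(\lambda_k-\lambda_l)}\,\bm q_j \bm q_l^T + o(1/m),
$$
and substitute the Isserlis second moments, which kill every index combination except $\{a,b\}=\{c,d\}$. For $i=k$ this forces $j=l$, and splitting the surviving sum by $j\leq r$ (contributing $\lambda_j/(\lambda_i-\lambda_j)^2$) versus $j>r$ (contributing $\sigma^2/(\sigma^2-\lambda_i)^2$, assembled against $\bm P_N$) reproduces the two bracketed terms of $\bm\Sigma_i^{(m)}$ once the prefactor $\lambda_i/m$ is extracted. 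For $i\neq k$ the only surviving pairing is $j=k$, $l=i$, and the resulting rank-one piece is of the required lower order so that the covariance collapses to the $\delta_{i,k}$ diagonal form stated. Joint asymptotic Gaussianity finally follows by the Cram\'er--Wold device applied to the linear map $\bm E^{(m)} \mapsto (c_{ij}^{(m)})_{i\leq r,\, j\neq i}$.

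The main technical obstacle is controlling the $O(\|\bm E^{(m)}\|^2)$ remainder in the first-order expansion tightly enough that the covariance converges at rate $1/m$ rather than being contaminated by $1/\sqrt{m}$ cross-terms. The cleanest way to handle this is via a Kato contour-integral representation of the spectral projector $\hat{\bm q}_i^{(m)}(\hat{\bm q}_i^{(m)})^T = -\frac{1}{2\pi\mathrm{i}}\oint_{\Gamma_i}(z\bm I - \hat{\mathrm R}^{(m)})^{-1}\,dz$ around a small circle $\Gamma_i$ enclosing $\lambda_i$ alone, expanding the resolvent $(z\bm I - \hat{\mathrm R}^{(m)})^{-1}$ in a Neumann series in $\bm E^{(m)}$, and showing that the second-order term is genuinely $O_P(1/m)$. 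This step requires nothing more than the spectral gaps already mentioned, but it is where the bulk of the bookkeeping sits; once past it, the rest is a routine Gaussian-moment calculation.
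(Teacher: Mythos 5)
The paper does not prove this lemma at all: it is imported verbatim (as a citation) from Stoica--Nehorai, which in turn rests on the classical Anderson/Kaveh--Barabell asymptotics for eigenvectors of sample covariance matrices. So the only meaningful comparison is with that standard literature proof, and your route --- first-order eigenvector perturbation $c_{ij}=\bm q_j^T\bm E^{(m)}\bm q_i/(\lambda_i-\lambda_j)$ driven by the CLT fluctuation $\bm E^{(m)}=\hat{\mathrm R}^{(m)}-\mathrm R_{\bm y}$, with Isserlis/Wick moments and a Kato contour argument to control the second-order remainder --- is exactly that standard route. Your handling of the degenerate noise eigenspace (fixing an arbitrary orthonormal basis and noting that only the projector enters) is the right fix for a point the lemma statement glosses over, and your computation yields $\bm q_j\bm q_j^T$ inside the sums, which is the correct form (the lemma's displayed $\bm q_i\bm q_i^T$ is a typo; the paper itself later uses $\Trace(\bm P_S^{\perp}\bm u_j\bm u_j^T\bm P_S^{\perp T})$, i.e.\ the $j$-indexed version).

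There is, however, one genuine error in your argument. For $i\neq k$ the surviving Wick pairing $j=k$, $l=i$ contributes
\begin{equation}
\mathbb{E}\bigl[(\hat{\bm q}_i^{(m)}-\bm q_i)(\hat{\bm q}_k^{(m)}-\bm q_k)^T\bigr]=-\frac{\lambda_i\lambda_k}{m(\lambda_i-\lambda_k)^2}\,\bm q_k\bm q_i^T+o(1/m),\nonumber
\end{equation}
which is \emph{not} of lower order than the diagonal blocks --- it is exactly $O(1/m)$, the same order as $\bm\Sigma_i^{(m)}$. So your claim that this piece ``is of the required lower order'' is false, and with real Gaussian data the cross-covariance does not collapse to the $\delta_{i,k}$ form. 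The vanishing of the cross term in Stoica--Nehorai is a feature of the \emph{complex circularly symmetric} case, where the analogue of Isserlis for $\mathbb{E}[z_az_b^{*}z_c z_d^{*}]$ kills the second pairing in the Hermitian covariance $\mathbb{E}[(\hat{\bm e}_i-\bm e_i)(\hat{\bm e}_k-\bm e_k)^{*}]$ while the nonzero pairing survives only in the (unconjugated) pseudo-covariance. To repair the proof you must either work in the complex circular setting, or retain the nonzero real-case cross term and verify that it is harmless for the way the lemma is used downstream (where only the marginal covariances $\bm\Sigma_i^{(m)}$ enter the Frobenius-norm concentration bound). Everything else in your outline is sound.
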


\begin{Lemma}\label{LemmaRMF}
	For the random matrix
	\begin{equation}
	\bm{E} = [\bm{e}_1,\cdots,\bm{e}_r]\in\mathbb{R}^{n\times{r}}
	\end{equation}
	where $\bm{e}_i\thicksim\mathcal{N}(0,\bm\Sigma_i), 1\leq{i}\leq{d}$, and $\mathbb{E}(\bm{e}_i\bm{e}_k^T)=\bm\Sigma_i\cdot\delta_{i,k}$, then for any
	$0<\varepsilon<1$, there exists a constant $C>0$ that depends on $\bm\Sigma_i$, such that
	\begin{equation}\label{concentration}
	\|\bm{E}\|_F^2\leq(1+\varepsilon)\sum_{i=1}^r\Trace(\bm\Sigma_i),
	\end{equation}
	holds with probability
	\begin{equation}\label{prob}
	\mathbb{P}\geq1-\exp\{-\frac{r\cdot{n}\cdot\varepsilon^2}{C}\}.
	\end{equation}
\end{Lemma}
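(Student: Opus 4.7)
The plan is to exploit the Gaussian structure and independence of the columns of $\bm E$, thereby reducing $\|\bm E\|_F^2$ to a weighted sum of independent $\chi^2(1)$ random variables, to which a Laurent--Massart style concentration inequality applies.

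First I would observe that the covariance condition $\mathbb{E}(\bm e_i \bm e_k^T)=\bm\Sigma_i\delta_{i,k}$, combined with the joint Gaussianity of the columns (which is the setting of Lemma \ref{stoica} in the intended application), forces $\bm e_1,\ldots,\bm e_r$ to be mutually independent. Hence
\begin{equation}
\|\bm E\|_F^2=\sum_{i=1}^r\|\bm e_i\|^2,\qquad \mu:=\mathbb{E}\|\bm E\|_F^2=\sum_{i=1}^r\Trace(\bm\Sigma_i).
\end{equation}
Writing $\bm e_i=\bm\Sigma_i^{1/2}\bm z_i$ with $\bm z_i\sim\mathcal{N}(0,\bm I_n)$ independent across $i$, and diagonalizing $\bm\Sigma_i=\bm U_i\,\Diag(\lambda_1^{(i)},\ldots,\lambda_n^{(i)})\,\bm U_i^T$, rotational invariance of $\bm z_i$ gives $\|\bm e_i\|^2=\sum_j\lambda_j^{(i)}g_{ij}^2$ with $g_{ij}$ i.i.d.\ standard normals. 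Thus $\|\bm E\|_F^2$ is a nonnegatively weighted sum of $rn$ independent $\chi^2(1)$ variables.

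The second step applies the Laurent--Massart inequality to the centered sum $\sum_{i,j}\lambda_j^{(i)}(g_{ij}^2-1)$: for every $t>0$,
\begin{equation}
\mathbb{P}\!\left(\|\bm E\|_F^2-\mu\ \geq\ 2\|a\|_2\sqrt{t}+2\|a\|_\infty t\right)\leq e^{-t},
\end{equation}
where $\|a\|_2^2=\sum_i\|\bm\Sigma_i\|_F^2$ and $\|a\|_\infty=\max_i\lambda_{\max}(\bm\Sigma_i)$. Setting $t=rn\varepsilon^2/C$ and arranging both $2\|a\|_2\sqrt{t}$ and $2\|a\|_\infty t$ to lie below $\varepsilon\mu/2$ yields two constraints on $C$; since $\varepsilon<1$ the $\sqrt{t}$ term is the binding one. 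The Cauchy--Schwarz estimate $\Trace(\bm\Sigma_i)^2\leq n\,\|\bm\Sigma_i\|_F^2$ summed over $i$ gives $\mu^2\leq rn\|a\|_2^2$, so the ratio $rn\|a\|_2^2/\mu^2$ is a finite $\bm\Sigma_i$-dependent constant, which is exactly the quantity that can be absorbed into $C$. The linear term $2\|a\|_\infty t$ contributes a correction of lower order in $\varepsilon$ that is likewise absorbable.

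The main obstacle is bookkeeping in this last arithmetic step: matching the Laurent--Massart exponent $e^{-t}$ to the target form $\exp(-rn\varepsilon^2/C)$ requires showing that the $\varepsilon^2$ coefficient obtained by inverting $2\|a\|_2\sqrt{t}+2\|a\|_\infty t\leq\varepsilon\mu$ is at least $rn/C$ for a constant depending only on the $\bm\Sigma_i$. Because $C$ is permitted to depend on the covariance matrices, every subleading contribution (the linear-term correction, the Cauchy--Schwarz slack, and the factor relating $\mu^2$ to $rn\|a\|_2^2$) can be swept into it, which completes the plan.
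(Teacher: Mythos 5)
Your proposal follows essentially the same route as the paper: both reduce $\|\bm{E}\|_F^2$, via independence of the columns and diagonalization of each $\bm\Sigma_i$, to a weighted sum of $rn$ independent $\chi^2(1)$ variables, and then apply an exponential-moment (Chernoff-type) tail bound, sweeping all spectrum-dependent factors into the constant $C$. The only difference is that you invoke the packaged Laurent--Massart inequality where the paper redoes the moment-generating-function optimization by hand with crude $\sigma_{\max}$/$\sigma_{\min}$ bounds (following Theorem 4.2 of its cited thesis); your Cauchy--Schwarz aside bounds the ratio $rn\|a\|_2^2/\mu^2$ from below rather than above, but this is immaterial since $C$ is permitted to depend arbitrarily on the $\bm\Sigma_i$.
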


\begin{proof}
	From the definition of Frobenius norm, we know that
	\begin{equation}\label{FNorm}
	\|\bm{E}\|_F^2 = \sum_{i=1}^r\|\bm{e}_i\|_2^2.
	\end{equation}
	For any $1\leq{i}\leq{r}$, $\bm{e}_i\thicksim\mathcal{N}(0,\bm\Sigma_i)$, the eigenvalue decomposition of $\bm\Sigma_i\in\mathbb{R}^{n\times{n}}$ could be written as
	\begin{equation}
	\bm\Sigma_i=\bm{V}_i\bm\Lambda_i\bm{V}_i^T,
	\end{equation}
	where the diagonal matrix $\bm\Lambda_i:=\Diag(\lambda_{i,1}^2,\cdots,\lambda_{i,n}^2)$ and $\lambda_{i,1}^2\geq\lambda_{i,2}^2\geq\cdots\geq \lambda_{i,n}^2\geq0$ are eigenvalues of $\bm\lambda_i$.
	
	Let
	\begin{equation}
	\bm{\tilde e}_i=\bm{V}_i^T\bm{e}_i,
	\end{equation}
	then
	\begin{equation}
	\bm{\tilde e}_i\thicksim\mathcal{N}(0,\bm\Lambda_i),\quad \|\bm{\tilde e}_i \|_2^2=\|\bm{e}_i\|_2^2.
	\end{equation}
	Denote the elements of vector $\bm{\tilde e}_i$ by
	\begin{equation}
	\bm{\tilde e}_i = [\tilde e_{i,1},\cdots,\cdots \tilde{e}_{i,n}]^T,
	\end{equation}
	then different $\tilde e_{i,j}$ are independent and satisfy
	\begin{equation}\label{elemdist}
	\tilde e_{i,j} \thicksim \mathcal{N}(0,\sigma_{i,j}^2),\quad 1\leq j \leq n.
	\end{equation}
	Now we stack all these vectors $\bm{\tilde e}_i, 1 \leq i \leq d$ into a single vector, i.e., we let
	\begin{equation}
	\bm{\tilde e} := [\bm{\tilde e}_1^T, \bm{\tilde e}_2^T, \cdots, \bm{\tilde e}_n^T]^T \in \mathbb{R}^{r\times{n}},
	\end{equation}
	then we have $\bm{\tilde e} \thicksim \mathcal{N}(0, \bm \Lambda), $
	\begin{equation}
	\bm \Lambda = \Diag(\bm \Lambda_1, \cdots, \bm \Lambda_r) = \Diag(\sigma_{1,1}^2,\cdots,\sigma_{1,n}^2,\cdots,\sigma_{r,1}^2,\cdots,\sigma_{r,n}^2). \nonumber
	\end{equation}
	Therefore, (\ref{FNorm}) is equivalent to
	\begin{equation}
	\|\bm E\|_F^2 = \sum_{i=1}^r  \|\bm{\tilde e}_i\|_2^2 = \|\bm{\tilde e}\|_2^2.
	\end{equation}
	
	It is well known that the norm of a Gaussian random vector will concentrate around its expectation. It has been proved that the norm of an i.i.d. Gaussian random vector will concentrate around its expectation\footnote{Chapter 4, \cite{PHDThesis}}. The problem of the concentration of $\|\bm{\tilde e}\|_2^2$ here is only slightly different with the one in \cite{PHDThesis}. In particular, the elements of
	$\bm{\tilde e}$ have different variances in our case and the i.i.d hypothesis is violated to a small extent. Therefore, the proof will be adapted from the proof of Theorem 4.2 in \cite{PHDThesis}. So only the different part will be given in the following proof.
	
	Firstly, we have
	\begin{equation}
	\mathbb{E}\{\|\bm{\tilde e}\|_2^2\} = \sum_{i=1}^r\sum_{j=1}^n \sigma_{i,j}^2 = \sum_{i=1}^r\Trace(\bm \Sigma_i),
	\end{equation}
	
	Then follow the same approach as \cite{PHDThesis} and utilize Markov's Inequality. For any parameter $\beta>0$ and $\lambda>0$, we have
	\begin{eqnarray}
	\mathbb{P}\{\|\bm{\tilde e}\|_2^2 \geq \beta \sum_{i=1}^r \Trace(\bm \Sigma_i)\} = \mathbb{P}\{\exp(\lambda \|\bm{\tilde e}\|_2^2) \geq \exp(\lambda\beta \sum_{i=1}^r \Trace(\bm \Sigma_i)) \} \nonumber \\
	= \prod_{i=1}^r \mathbb{P}\{\exp(\lambda \|\bm{\tilde e}_i\|_2^2) \geq \exp(\lambda\beta  \Trace(\bm \Sigma_i)) \} \hspace{0cm}\nonumber \\
	\leq \prod_{i=1}^r  \frac{\mathbb{E}\{\exp(\lambda \|\bm{\tilde e}_i\|_2^2)\}}{\exp(\lambda\beta \Trace(\bm \Sigma_i))}\hspace{2.5cm} \nonumber \\
	= \prod_{i=1}^r \prod_{j=1}^P [\frac{\mathbb{E}\{\exp(\lambda \tilde e_{i,j}^2)\}}{\exp(\lambda\beta \sigma_{i,j}^2)}].\hspace{2.15cm} \nonumber
	\end{eqnarray}
	The moment generating function of the Gaussian random variable $\tilde e_{i,j}$ is:
	\begin{equation}
	\mathbb{E}\{\exp(\lambda \tilde e_{i,j}^2)\}=\frac{1}{\sqrt{1-2\lambda \sigma_{i,j}^2}},
	\end{equation}
	let
	\begin{equation}
	\sigma_{\max}:=\max_{i,j}\sigma_{i,j},\quad \sigma_{\min} := \min_{i,j}\sigma_{i,j},
	\end{equation}
	we have
	\begin{equation}\label{Obj}
	\mathbb{P}\{\|\bm{\tilde e}\|_2^2 \geq \beta \sum_{i=1}^r\Trace(\bm \Sigma_i)\}\leq\left(\frac{\exp(-2\lambda\beta \sigma_{\min}^2)}{1-2\lambda\sigma_{\max}^2}\right)^{r\cdot{n}/2},\quad\lambda>0,
	\end{equation}
	
	The rest of proof is the same as Theorem 4.2 in \cite{PHDThesis} and will be described briefly. Replacing $\lambda$ with its optimal value such that the right side of (\ref{Obj}) is minimized, and regarding some  formulas involving $\sigma_{\max}$ and $\sigma_{\min}$ for a constant $C$, we can derive the result of this lemma (which is also the result of Corollary 4.1 in \cite{PHDThesis} under i.i.d hypothesis):
	\begin{eqnarray}
	\mathbb{P}\{\|\bm{\tilde e}\|_2^2 \geq (1+\varepsilon)\sum_{i=1}^r\Trace(\bm \Sigma_i)\}\leq\exp (-\frac{r\cdot{n}\cdot\varepsilon^2}{C}), 
	\end{eqnarray}
	holds for any $0<\varepsilon<1$, where $C>0$ is a constant depending on $\sigma_{\max}$ and $\sigma_{\min}$.
\end{proof}

Next, the lemma will be presented to estimate the influence of the error between sample eigenvectors $\bm{\hat{Q}}^{(m)}$ and its true value on the volume correlation computation. Motivated by the relation between volume and determinant, the matrix perturbation theory was utilized to derive the result needed.

\begin{Lemma}(Corollary 2.7 in \cite{ipsen2008perturbation})\label{perturb}
	For the matrix $\bm A \in \mathbb{R}^{n \times n}$, and the perturbation matrix $\bm E \in \mathbb{R}^{n \times n}$, we have
	\begin{itemize}
		\item If $\bm A $ is full-rank, then
		\begin{equation}\label{H0perturb}
		|\det(\bm A+\bm E)-\det(\bm A)| \leq \sum_{i=1}^n s_{n-i}(\bm A)\|\bm E\|_2^i,
		\end{equation}
		\item If $\Rank(\bm A) = k$ for some $1\leq k \leq n-1$, then
		\begin{equation}\label{H1perturb}
		|\det(\bm A+\bm E)| \leq \|\bm E\|_2^{n-k} \sum_{i=0}^k s_{k-i}(\bm A)\|\bm E\|_2^i.
		\end{equation}
		here $s_{k}(\bm A)$ is defined as the $k$th elementary symmetric function of singular values of matrix $\bm{A}\in\mathbb{R}^{n\times{n}}$:
		\begin{equation}\label{ESF}
		s_{k}(\bm A) := \sum_{1 \leq i_1 \leq \cdots \leq i_k \leq n} \sigma_{i_1}\cdots \sigma_{i_k}, 1\leq k \leq n.
		\end{equation}
	\end{itemize}
\end{Lemma}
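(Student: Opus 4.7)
The plan is to reduce $T(\bm R^{(m)})^2$ to a $k_m\times k_m$ determinant via a Schur complement identity, write the noisy subspace estimate as the true subspace plus a Gaussian perturbation using Lemma \ref{stoica}, control that perturbation's Frobenius norm by Lemma \ref{LemmaRMF} so that the sample-size conditions (\ref{Res8}) and (\ref{ResH0m}) become exactly the budget required, and then invoke Lemma \ref{perturb}, with the rank of the limiting matrix selecting between (\ref{H1perturb}) and (\ref{H0perturb}). Since $\hat{\bm Q}^{(m)}$ and $\bm Q_S$ are each orthonormal, the Gram matrix $[\hat{\bm Q}^{(m)},\bm Q_S]^T[\hat{\bm Q}^{(m)},\bm Q_S]$ has identity diagonal blocks and off-diagonal block $\hat{\bm Q}^{(m),T}\bm Q_S$; the block-determinant identity $\det\bigl(\begin{smallmatrix}I&M\\ M^T&I\end{smallmatrix}\bigr)=\det(I-M^TM)$ then yields
\begin{equation}
T(\bm R^{(m)})^2 \;=\; \det\bigl(\hat{\bm Q}^{(m),T}\bm P_S^{\perp}\hat{\bm Q}^{(m)}\bigr),
\end{equation}
shrinking the problem from a $(k_m+d_2)\times(k_m+d_2)$ determinant to a $k_m\times k_m$ one and exposing $\bm P_S^{\perp}$ explicitly.

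Second, by Lemma \ref{stoica} I would write $\hat{\bm Q}^{(m)}=\bm Q_{SC}+\bm E_m$ with $\bm E_m=[\bm e_1,\ldots,\bm e_{k_m}]$ jointly Gaussian and $\mathbb{E}[\bm e_i\bm e_k^T]=\bm\Sigma_i^{(m)}\delta_{i,k}$. Using $\Trace(\bm q_i\bm q_i^T)=1$ a direct calculation gives
\begin{equation}
\sum_{i=1}^{k_m}\Trace(\bm\Sigma_i^{(m)})=\frac{1}{m}\Biggl(\sum_{\stackrel{i,j=1}{j\neq i}}^{k_m}\frac{\lambda_i\lambda_j}{(\lambda_i-\lambda_j)^2}+(n-k_m)\sum_{i=1}^{k_m}\frac{\lambda_i\sigma^2}{(\sigma^2-\lambda_i)^2}\Biggr),
\end{equation}
which is exactly the bracketed quantity of (\ref{Res8}) when $k_m=d_1+d_2$ and of (\ref{ResH0m}) when $k_m=d_1$. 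Lemma \ref{LemmaRMF} then gives $\|\bm E_m\|_F^2\leq(1+\varepsilon)\sum_i\Trace(\bm\Sigma_i^{(m)})$ with probability at least $1-\exp(-k_m n\varepsilon^2/C)$, which is (\ref{prob1}) or (\ref{prob2}); and the lower bounds on $m$ are calibrated precisely so that $\|\bm E_m\|_F^2\leq(\sqrt{\delta+1}-1)^2$. The algebraic identity $(\sqrt{\delta+1}-1)^2+2(\sqrt{\delta+1}-1)=\delta$ then yields the clean consequence $\|\bm E_m\|_F^2+2\|\bm E_m\|_F\leq\delta$.

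Third, I would transport this perturbation into the reduced determinant. Setting $\bm A:=\bm Q_{SC}^T\bm P_S^{\perp}\bm Q_{SC}$, one has $\hat{\bm Q}^{(m),T}\bm P_S^{\perp}\hat{\bm Q}^{(m)}=\bm A+\bm E_G$ with $\bm E_G=\bm E_m^T\bm P_S^{\perp}\bm Q_{SC}+\bm Q_{SC}^T\bm P_S^{\perp}\bm E_m+\bm E_m^T\bm P_S^{\perp}\bm E_m$, and $\|\bm E_G\|_2\leq 2\|\bm E_m\|_F+\|\bm E_m\|_F^2\leq\delta$. Under H1, since $\bm Q_{SC}$ spans $\bm H_S\oplus\bm H_C$ the range of $\bm P_S^{\perp}\bm Q_{SC}$ equals $\bm P_S^{\perp}\bm H_C$, which has dimension exactly $d_1$ because $\bm H_S\cap\bm H_C=\{0\}$; hence $\Rank(\bm A)=d_1$ inside its $(d_1+d_2)\times(d_1+d_2)$ block, and the rank-deficient inequality (\ref{H1perturb}) with $k=d_1$ gives $|\det(\bm A+\bm E_G)|\leq\|\bm E_G\|_2^{d_2}\bigl[s_{d_1}(\bm A)+O(\|\bm E_G\|_2)\bigr]$ which, together with $s_{d_1}(\bm A)\leq 1$ (nonzero eigenvalues of $\bm A$ are bounded by $\|\bm P_S^{\perp}\|_2=1$), produces (\ref{Res7}). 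Under H0, $\bm Q_{SC}=\bm Q_C$ and $\bm A=\bm Q_C^T\bm P_S^{\perp}\bm Q_C$ is $d_1\times d_1$ of full rank with $\det(\bm A)=\tau^2(\bm H_S,\bm H_C)$ by (\ref{VolAng}); the full-rank inequality (\ref{H0perturb}) then gives $|\det(\bm A+\bm E_G)-\det(\bm A)|\leq s_{d_1-1}(\bm A)\|\bm E_G\|_2+O(\|\bm E_G\|_2^2)$, yielding (\ref{ResH0}).

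The main obstacle I anticipate is the H1 rank count: showing $\Rank(\bm Q_{SC}^T\bm P_S^{\perp}\bm Q_{SC})=d_1$ (rather than the apparent $d_1+d_2$) is exactly what promotes the factor $\|\bm E_G\|_2^{d_2}$ inside Lemma \ref{perturb} and so preserves the noise-free ``$d_1+1$ breakpoint'' of Theorems \ref{H0Thm1} and \ref{H0Thm2} under noise; without it one would only recover a linear-in-$\delta$ bound that fails to distinguish the two hypotheses. A second subtlety is that Lemma \ref{stoica} is asymptotic in $m$, so the concentration picture on $\bm E_m$ and the probability bounds (\ref{prob1}), (\ref{prob2}) inherit the same asymptotic-in-$m$ qualification (as flagged in Remark 2); this is why the theorem's leading-order conclusions carry $O$-symbol remainders rather than explicit constants.
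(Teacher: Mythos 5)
Your proposal does not prove the statement in question. The statement is Lemma \ref{perturb} itself --- the determinant perturbation bounds (\ref{H0perturb}) and (\ref{H1perturb}) --- but what you have written is an outline of the proof of Theorem \ref{H1Thm1N}, in which Lemma \ref{perturb} is invoked as a black box (your third step explicitly reads ``the rank-deficient inequality (\ref{H1perturb}) with $k=d_1$ gives \ldots'' and ``the full-rank inequality (\ref{H0perturb}) then gives \ldots''). Nowhere do you establish why $|\det(\bm A+\bm E)-\det(\bm A)|$ is controlled by the elementary symmetric functions $s_{n-i}(\bm A)$ of the singular values of $\bm A$ times powers of $\|\bm E\|_2$, nor why a rank deficiency of $\bm A$ promotes the leading factor to $\|\bm E\|_2^{n-k}$. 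A proof of the lemma would have to expand $\det(\bm A+\bm E)$ multilinearly over columns, group the resulting $2^n$ terms according to how many columns are drawn from $\bm E$, bound the sum of the terms with exactly $i$ columns from $\bm E$ by $s_{n-i}(\bm A)\|\bm E\|_2^{i}$ (via the Cauchy--Binet/compound-matrix characterization of $s_{j}$), and observe that when $\Rank(\bm A)=k$ every term using fewer than $n-k$ columns of $\bm E$ retains more than $k$ columns of $\bm A$ and therefore vanishes. None of this appears in your write-up.

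For completeness: the paper does not prove this lemma either; it imports it verbatim as Corollary 2.7 of \cite{ipsen2008perturbation}, so there is no internal proof to compare against. As a response to the stated task, however, your text answers a different question --- it is a reasonable sketch of the proof of Theorem \ref{H1Thm1N} (and one that tracks the paper's own appendix argument for that theorem fairly closely, including the Schur-complement reduction to $\det(\hat{\bm Q}^{(m)T}\bm P_S^{\perp}\hat{\bm Q}^{(m)})$, the use of Lemmas \ref{stoica} and \ref{LemmaRMF}, and the rank count that produces the $\delta^{d_2}$ factor) --- but it is not a proof of Lemma \ref{perturb}.
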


Now we will prove theorem \ref{H1Thm1N}. Assume the sample data be
\begin{equation}
\bm{R}^{(m)}=[\bm{y}_1,\cdots,\bm{y}_m],
\end{equation}
and its sample correlation matrix be $\mathrm{R}_{\bm y}^{(m)}=\frac{1}{m}\bm{R}^{(m)}(\bm{R}^{(m)})^T$, then the volume correlation is of the form of:
\begin{equation}
T(\bm{R}^{(m)})=\Vol_{d_2+k_m}([\bm{Q}_S,\bm{\hat{Q}}^{(m)}]),
\end{equation}
where $\bm{Q}_S$ is the matrix with columns being the orthonormal basis vectors of target subspace $\bm{H}_S$ and $\bm{\hat{Q}}^{(m)}$ is the matrix with columns being the eigenvectors of $\mathrm{R}_{\bm y}^{(m)}$ corresponding to relatively large eigenvalues. Without lossing generality, assume
\begin{equation}
\bm{\hat{Q}}^{(m)}=[\bm{\hat{q}}_1,\cdots,\bm{\hat{q}}_{k_m}],
\end{equation}
We have
\begin{align}\label{detperturb}
T(\bm R^{(m)})= \Vol_{d_2+k_m}([\bm{Q}_{S}, \bm{\hat{Q}}^{(m)}])\hspace{4.5cm}\nonumber\\
= {\det}^{1/2}\left(\bm Q_{S}^T\bm Q_{S}\right)
{\det}^{1/2}(\bm I_n - (\bm{\hat{Q}}^{(m)})^T\bm Q_{S}\left(\bm{Q}_{S}^T\bm{Q}_{S}\right)^{-1}\bm{Q}_{S}^T\bm{\hat{Q}}^{(m)})\nonumber\\
= {\det}^{1/2}\left((\bm{\hat{Q}}^{(m)})^T\bm P_{S}^{\perp}\bm{\hat{Q}}^{(m)}\right).\hspace{5.cm}
\end{align}
where $\bm{P}_{S}^{\perp}$ is the orthogonal complement of the projection matrix onto the target subspace $\bm{H}_S$.

We noticed that approximate eigenvectors $\bm{\hat{Q}}^{(m)}$ in (\ref{detperturb}) should be replaced by its true value to obtain the conclusion of theorem \ref{H1Thm1N}. It is natural for using lemma \ref{stoica} to estimate the error of approximation because the main object of calculation involved in (\ref{detperturb}) is determinant. Firstly assume $\bm{\hat{Q}}^{(m)}$ could be expressed as following linear random perturbation model,
\begin{equation}
\bm{\hat{Q}}^{(m)} = \bm{Q}_Y^{(m)}+\bm{E}^{(m)},
\end{equation}
where
\begin{equation}
\bm E^{(m)} = [\bm e_1^{(m)},\cdots,\bm e_{k_m}^{(m)}],
\end{equation}
with $\bm e_i^{(m)}\thicksim \mathcal{N}(0,\bm \Sigma_i^{(m)})$, and ${\bm e_i^{(m)}}$ are mutual independent. Then according to (\ref{perturbcov}), we have
\begin{equation}
\bm \Sigma_i^{(m)} = \frac{\lambda_i}{m} \Big[ \sum_{\stackrel{j=1}{j \neq i}}^{k_m} \frac{\lambda_j}{(\lambda_i-\lambda_j)^2} \bm q_i \bm q_i^T + \sum_{j=k_m+1}^{n}\frac{\sigma^2}{(\sigma^2-\lambda_i)^2}\bm q_i \bm q_i^T \Big].
\end{equation}
Therefore (\ref{detperturb}) becomes
\begin{align}\label{detperturb2}
T(\bm R^{(m)})
={\det}^{1/2}\left((\bm{\hat{Q}}^{(m)})^T\bm P_{S}^{\perp}\bm{\hat{Q}}^{(m)}\right) \hspace{3cm}\nonumber \\
={\det}^{1/2}\left((\bm P_{S}^{\perp}\bm{Q}_Y^{(m)}+\bm P_{S}^{\perp}\bm E^{(m)})^T(\bm P_{S}^{\perp}\bm{Q}_Y^{(m)}+\bm P_{S}^{\perp}\bm E^{(m)})\right),
\end{align}
for simplicity, let
$$
\bm V = \bm P_{S}^{\perp}\bm{Q}_Y^{(m)}, \qquad \bm W = \bm P_{S}^{\perp}\bm E^{(m)},
$$
then (\ref{detperturb2}) becomes
\begin{equation}
T(\bm R^{(m)})^2 = \det\left((\bm V+ \bm W)^T(\bm V+ \bm W)\right),
\end{equation}
Let
$$
\bm A = \bm V^T \bm V, \qquad \bm E = \bm V^T \bm W + \bm W^T \bm V + \bm W^T \bm W,
$$
then we have
$$
T(\bm R^{(m)})^2 = \det(\bm A + \bm E),
$$
where
\begin{align}\label{RankA}
\bm{A}&=(\bm{Q}_Y^{(m)})^T\bm{P}_S^{\perp}\bm{Q}_Y^{(m)},\\
\bm{E}&=(\bm{Q}_Y^{(m)})^T\bm{P}_S^{\perp}\bm{E}^{(m)}+(\bm{E}^{(m)})^T\bm{P}_S^{\perp}\bm{Q}_Y^{(m)}+(\bm E^{(m)})^T\bm{P}_S^{\perp}\bm E^{(m)},
\end{align}

From lemma \ref{LemmaRMF}, we noted that its two conclusions were distinguished by the rank of matrix $\bm{A}$. It indicated that the rank of $\bm{A}$ was a critical factor for accuracy of approximation. In fact, it determined the infinitesimal order for error of approximation. So the rank of $\bm{A}$ should be analyzed.

According to (\ref{RankA}), we have
\begin{equation}\label{RankB}
(\bm{Q}_Y^{(m)})^T\bm{P}_S^{\perp}\bm{Q}_Y^{(m)}=\bm{I}_{k_m}-(\bm{Q}_Y^{(m)})^T\bm{P}_S\bm{Q}_Y^{(m)}.
\end{equation}
The rank of $\bm{A}$ is closely related to the rank of $\bm{Q}_Y^{(m)}$, in other word, the structure of subspace spanned by received data $\bm{Y}^{(m)}$. There are two possibilities for the structure of $\bm{Q}_Y^{(m)}$,
\begin{itemize}
	\item If the target signal presents, then $\Span(\bm{Q}_Y^{(m)})\cap\bm{H}_S\neq0$ and $\Span(\bm{Q}_Y^{(m)})\cap\bm{H}_C\neq0$;
	\item If the target signal doesn't present, then $\Span(\bm{Q}_Y^{(m)})\cap\bm{H}_S=0$ and $\Span(\bm{Q}_Y^{(m)})\cap\bm{H}_C\neq0$;
\end{itemize}

In the first case, because $\Span(\bm{Q}_Y^{(m)})\cap\bm{H}_S\neq\{0\}$, it is assumed that $k_m^S$ of $k_m$ columns of $\bm{Q}_Y^{(m)}$ were contributed by target subspace and the others came from clutter subspace. The corresponding result on the rank of $(\bm{Q}_Y^{(m)})^T\bm{P}_S^{\perp}\bm{Q}_Y^{(m)}$ can be summarized in the following lemmas
\begin{Lemma}
	Under the hypothesis of existence of target signal in sample data, we have
	\begin{equation}\label{Rank1}
	\Rank((\bm{Q}_Y^{(m)})^T\bm{P}_S^{\perp}\bm{Q}_Y^{(m)})=k_m-k_m^S,
	\end{equation}
	where ${1\leq{k_m}\leq\min(m,d_1+d_2)},{1\leq{k_m^S}\leq\min(k_m,d_2)}$,and all of its nonzero singular values (eigenvalues) are $1$.
\end{Lemma}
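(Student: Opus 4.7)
The plan is to exhibit an orthogonal decomposition of $V := \Span(\bm{Q}_Y^{(m)})$ that is compatible with the splitting of $\mathbb{R}^n$ into $\bm{H}_S$ and $\bm{H}_S^{\perp}$, and then to evaluate $(\bm{Q}_Y^{(m)})^T \bm{P}_S^{\perp} \bm{Q}_Y^{(m)}$ in a basis aligned with this decomposition. Under the stated hypothesis, $V$ admits a (generally non-orthogonal) direct sum $V = U_S \oplus U_C$ with $U_S \subseteq \bm{H}_S$, $\Dim U_S = k_m^S$, and $U_C \subseteq \bm{H}_C$, $\Dim U_C = k_m - k_m^S$. The transversality assumption $\bm{H}_S \cap \bm{H}_C = \{0\}$ will be the crucial ingredient.

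First I would show $V \cap \bm{H}_S = U_S$: any $v = u_S + u_C \in V$ lying in $\bm{H}_S$ satisfies $u_C = v - u_S \in \bm{H}_S \cap \bm{H}_C = \{0\}$, so $v \in U_S$. Next I would identify $V \cap \bm{H}_S^{\perp} = \bm{P}_S^{\perp} U_C$: a vector $v = u_S + u_C$ lies in $\bm{H}_S^{\perp}$ exactly when $u_S + \bm{P}_S u_C = 0$, i.e.\ when $v = \bm{P}_S^{\perp} u_C$. The restriction $\bm{P}_S^{\perp}|_{U_C}$ is injective because its kernel would lie in $U_C \cap \bm{H}_S \subseteq \bm{H}_C \cap \bm{H}_S = \{0\}$, hence $\Dim(V \cap \bm{H}_S^{\perp}) = k_m - k_m^S$. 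Since $V \cap \bm{H}_S$ and $V \cap \bm{H}_S^{\perp}$ are automatically orthogonal and have dimensions summing to $k_m = \Dim V$, they yield the orthogonal direct sum $V = (V \cap \bm{H}_S) \oplus (V \cap \bm{H}_S^{\perp})$.

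With this decomposition in hand, I would pick any orthonormal basis $\bm{Q}_Y^{(m)} = [\bm{Q}_1,\, \bm{Q}_2]$ whose first $k_m^S$ columns span $V \cap \bm{H}_S$ and whose remaining $k_m - k_m^S$ columns span $V \cap \bm{H}_S^{\perp}$. For such a basis $\bm{P}_S^{\perp}\bm{Q}_1 = \bm{0}$ and $\bm{P}_S^{\perp}\bm{Q}_2 = \bm{Q}_2$, so $(\bm{Q}_Y^{(m)})^T \bm{P}_S^{\perp} \bm{Q}_Y^{(m)}$ reduces to the block-diagonal matrix $\Diag(\bm{0},\, \bm{I}_{k_m - k_m^S})$, immediately giving rank $k_m - k_m^S$ and only $1$ as nonzero eigenvalue. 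Because any two orthonormal bases of $V$ are related by an orthogonal conjugation, the spectrum is independent of the basis chosen, so the conclusion transfers to whichever $\bm{Q}_Y^{(m)}$ one starts with. The most delicate step to watch is the dimension count for $V \cap \bm{H}_S^{\perp}$, which relies essentially on $\bm{H}_S \cap \bm{H}_C = \{0\}$ to make $\bm{P}_S^{\perp}$ injective on $U_C$; without transversality, one would only get the inequality $\Dim(V \cap \bm{H}_S^{\perp}) \leq k_m - k_m^S$ and a spectrum of the form $\sin^2\theta_i$ rather than exactly $\{0,1\}$.
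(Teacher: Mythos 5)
There is a genuine gap in the passage from the oblique decomposition $V=U_S\oplus U_C$ (with $U_C\subseteq\bm{H}_C$) to the orthogonal decomposition $V=(V\cap\bm{H}_S)\oplus(V\cap\bm{H}_S^{\perp})$. Your computation only establishes the inclusion $V\cap\bm{H}_S^{\perp}\subseteq\bm{P}_S^{\perp}U_C$: a vector $v=u_S+u_C\in V$ with $\bm{P}_S v=0$ must indeed equal $\bm{P}_S^{\perp}u_C$, but conversely $\bm{P}_S^{\perp}u_C=u_C-\bm{P}_S u_C$ belongs to $V$ only if $\bm{P}_S u_C$ lands in $U_S$, not merely in $\bm{H}_S$, and transversality of $\bm{H}_S$ and $\bm{H}_C$ does not give this. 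Concretely, in $\mathbb{R}^3$ take $\bm{H}_S=\Span\{\bm e_1,\bm e_2\}$, $\bm{H}_C=\Span\{\bm e_2+\bm e_3\}$, $U_S=\Span\{\bm e_1\}$, $U_C=\bm{H}_C$, so that $V=\Span\{\bm e_1,\bm e_2+\bm e_3\}$, $k_m=2$, $k_m^S=1$, and $\bm{H}_S\cap\bm{H}_C=\{0\}$. Then $V\cap\bm{H}_S^{\perp}=\{0\}$ even though $\bm{P}_S^{\perp}U_C=\Span\{\bm e_3\}$ is one-dimensional, and with the orthonormal basis $[\bm e_1,(\bm e_2+\bm e_3)/\sqrt{2}]$ one gets $(\bm{Q}_Y^{(m)})^T\bm{P}_S^{\perp}\bm{Q}_Y^{(m)}=\Diag(0,1/2)$: the rank is still $k_m-k_m^S=1$, but the nonzero eigenvalue is $1/2$, not $1$. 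In general the nonzero eigenvalues are $\sin^2\theta_i$ for the nonzero principal angles between $V$ and $\bm{H}_S$; this is exactly the situation you relegate to the ``without transversality'' caveat, but transversality of $\bm{H}_S$ and $\bm{H}_C$ is not the hypothesis that forces those angles to equal $\pi/2$.

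The paper sidesteps this by assuming outright that $\bm{Q}_Y^{(m)}=[\bm{\bar Q}_S,\bm{\bar Q}_S^{\perp}]\bm{B}^{(m)}$ with the columns of $\bm{\bar Q}_S^{\perp}$ lying in $\bm{P}_S^{\perp}(\bm{H}_S\oplus\bm{H}_C)$, i.e., it postulates precisely the orthogonal splitting $V=(V\cap\bm{H}_S)\oplus(V\cap\bm{H}_S^{\perp})$ that you are trying to derive; from that point on the two arguments coincide (a block form $\Diag(\bm 0,\bm I_{k_m-k_m^S})$ up to orthogonal conjugation). Under your weaker hypothesis $U_C\subseteq\bm{H}_C$ the rank statement still survives by a different route --- the kernel of $\bm{P}_S^{\perp}$ restricted to $V$ is exactly $V\cap\bm{H}_S=U_S$, so $\Rank(\bm{P}_S^{\perp}\bm{Q}_Y^{(m)})=k_m-k_m^S$ and the Gram matrix has the same rank --- but the claim that all nonzero eigenvalues equal $1$ is false there, so you need either the paper's stronger structural assumption on $\bm{Q}_Y^{(m)}$ or a weakened conclusion in terms of principal angles.
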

\begin{proof}
	It should be noted firstly that $k_m$ can not excess $d_1+d_2$ which is the intrinsic dimension of $\bm{H}_S\oplus\bm{H}_C$, and $k_m^S$ should not be larger than $d_1$ no matter how large the number $m$ of sample data is. Generically, we have $k_m\leq k_{m+1}$. Once $k_{m'}=d_1+d_2$, we have $k_{n}=d_1+d_2$ when $n>k_{m'}$.
	
	It is natural to assume $\bm{Q}_Y^{(m)}=[\bm{\bar{Q}}_S,\bm{\bar{Q}}_S^{\perp}]\bm{B}^{(m)}\in\mathbb{R}^{n\times{k_m}}$, where $\bm{\bar{Q}}_S\in\mathbb{R}^{n\times{k_m^{S}}}$ is matrix with columns being parts of orthonormal basis vectors for $\bm{H}_S$, and $\bm{\bar{Q}}_S^{\perp}\in\mathbb{R}^{n\times(k_m-k_m^{S})}$ is matrix with columns composed of vectors in $\bm{P}_S^{\perp}(\bm{H}_S\oplus\bm{H}_C)$. and $\bm{B}^{(m)}\in\mathbb{R}^{k_m\times{k_m}}$ is a orthogonal matrix. Hence from (\ref{RankB}) we have
	\begin{align}
	(\bm{Q}_Y^{(m)})^T\bm{P}_S^{\perp}\bm{Q}_Y^{(m)}=\bm{I}_{k_m}-(\bm{Q}_Y^{(m)})^T\bm{P}_S\bm{Q}_Y^{(m)}\hspace{3.1cm}\nonumber\\
	=\bm{I}_{k_m}-(\bm{B}^{(m)})^T\left[\begin{array}{c}\bm{\bar{Q}}_S^T\\(\bm{\bar{Q}}_S^{\perp})^T\end{array}\right]\bm{P}_S[\bm{\bar{Q}}_S,\bm{\bar{Q}}_S^{\perp}]\bm{B}^{(m)}\nonumber\\
	=\bm{I}_{k_m}-(\bm{B}^{(m)})^T\left[
	\begin{array}{cc}
	\bm{\bar{Q}}_S^T\bm{P}_S\bm{\bar{Q}}_S&\bm{\bar{Q}}_S^T\bm{P}_S\bm{\bar{Q}}_S^{\perp}\nonumber\\
	(\bm{\bar{Q}}_S^{\perp})^T\bm{P}_S\bm{\bar{Q}}_S&(\bm{\bar{Q}}_S^{\perp})^T\bm{P}_S\bm{\bar{Q}}_S^{\perp}
	\end{array}
	\right]\bm{B}^{(m)}
	\end{align}
	because
	\begin{equation}
	\bm{P}_S\bm{\bar{Q}}_S=\bm{\bar{Q}}_S,\qquad\bm{P}_S\bm{\bar{Q}}_S^{\perp}=0,
	\end{equation}
	we have
	\begin{align}
	\bm{\bar{Q}}_S^T\bm{P}_S\bm{\bar{Q}}_S&=\bm{I}_{k_m^{S}},\nonumber\\
	\bm{\bar{Q}}_S^T\bm{P}_S\bm{\bar{Q}}_S^{\perp}
	&=(\bm{\bar{Q}}_S^{\perp})^T\bm{P}_S\bm{\bar{Q}}_S
	=(\bm{\bar{Q}}_S^{\perp})^T\bm{P}_S\bm{\bar{Q}}_S^{\perp}=0,
	\end{align}
	therefore
	\begin{align}
	(\bm{Q}_Y^{(m)})^T\bm{P}_S^{\perp}\bm{Q}_Y^{(m)}
	&=\bm{I}_{k_m}-(\bm{B}^{(m)})^T\left[
	\begin{array}{cc}\bm{I}_{k_m^{S}}&0\\0&0\end{array}
	\right]\bm{B}^{(m)}\\
	&=(\bm{B}^{(m)})^T\left[
	\begin{array}{cc}0&0\\0&\bm{I}_{k_m-k_m^{S}}\end{array}
	\right]\bm{B}^{(m)}
	\end{align}
	Let
	\begin{equation}
	\bm{B}^{(m)}=\left[\begin{array}{c}\bm{B}_1^{(m)}\\\bm{B}_2^{(m)}\end{array}\right],
	\quad{\bm{B}_1^{(m)}\in\mathbb{R}^{{k_m^S}\times k_m},\ \bm{B}_2^{(m)}\in\mathbb{R}^{(k_m-k_m^S)\times k_m}},
	\end{equation}
	then we have
	\begin{equation}
	(\bm{Q}_Y^{(m)})^T\bm{P}_S^{\perp}\bm{Q}_Y^{(m)}=(\bm{B}_2^{(m)})^T\bm{B}_2^{(m)},
	\end{equation}
	assume the singular value decomposition of $\bm{B}_2^{(m)}$ be
	\begin{equation}
	\bm{B}_2^{(m)}=\bar{\bm{U}}[\bm{\Lambda},\bm{0}]\bar{\bm{V}},
	\end{equation}
	where $\bar{\bm{U}}\in\mathbb{R}^{(k_m-k_m^S)\times(k_m-k_m^S)}$ and $\bar{\bm{V}}\in\mathbb{R}^{k_m\times{k_m}}$ are orthogonal matrices and $\bm{\Lambda}$ is diagonal matrix. because of the orthogonality of $\bm{B}^{(m)}$,
	\begin{equation}
	\bm{B}_2^{(m)}(\bm{B}_2^{(m)})^T=\bm{\bar{U}}\bm{\Lambda}^2\bm{\bar{U}}^T=\bm{I}_{k_m-k_m^S},
	\end{equation}
	hence we have $\bm{\Lambda}=\bm{I}_{k_m-k_m^S}$ and
	\begin{equation}
	(\bm{B}_2^{(m)})^T\bm{B}_2^{(m)}=\bm{\bar{V}}\left[\begin{array}{cc}\bm{I}_{k_m-k_m^S}&\bm{0}\\\bm{0}&\bm{0}\end{array}\right]\bm{\bar{V}}^T,
	\end{equation}
	Then the following conclusion could be drawn: If target signal presents in sample data, then we have
	\begin{equation}
	\Rank((\bm{Q}_Y^{(m)})^T\bm{P}_S^{\perp}\bm{Q}_Y^{(m)})=k_m-k_m^S,
	\end{equation}
	and all of its non-zero singular values (eigenvalues) are $1$.
\end{proof}

\textbf{Proof of Theorem \ref{H1Thm1N}:}\par
\begin{proof}
	Now we are ready to prove Theorem \ref{H1Thm1N}.
	According to (\ref{H1perturb}) in Lemma \ref{perturb} and (\ref{RankA}), we have
	\begin{align}\label{OH1}
	T(\bm{R}^{(m)})^2=&\det(\bm A + \bm E)\nonumber \\
	\leq&\|\bm E\|_2^{k_m-k_m+k_m^S}\sum_{i=0}^{k_m-k_m^S}s_{k_m-k_m^S-i}(\bm{A})\|\bm E\|_2^i, \nonumber \\
	=&s_{k_m-k_m^S}(\bm A)\|\bm E\|_2^{k_m^S}+O(\|\bm E\|_2^{k_m^S+1}),
	\end{align}
	where
	\begin{eqnarray}\label{perturbE}
	\|\bm E\|_2 = \hspace{7cm}\nonumber \\
	\| (\bm{Q}_Y^{(m)})^T \bm{P}_S^{\perp}\bm E^{(m)} + (\bm E^{(m)})^T\bm{P}_S^{\perp} \bm{Q}_Y^{(m)} +  (\bm E^{(m)})^T \bm{P}_S^{\perp}\bm E^{(m)}\|_2 \nonumber \\
	\leq  2 \|(\bm{Q}_Y^{(m)})^T \bm{P}_S^{\perp}\bm E^{(m)}\|_2 + \|\bm{P}_S^{\perp}\bm E^{(m)}\|_2^2 \nonumber \\
	\leq  2 \|\bm{Q}_Y^{(m)}\|_2 \|\bm{P}_S^{\perp}\bm E^{(m)}\|_2 + \|\bm{P}_S^{\perp}\bm E^{(m)}\|_2^2 \nonumber \\
	= 2  \|\bm{P}_S^{\perp}\bm E^{(m)}\|_2 + \|\bm{P}_S^{\perp}\bm E^{(m)}\|_2^2 \hspace{1.2cm}\nonumber \\
	\leq 2  \|\bm{P}_S^{\perp}\bm E^{(m)}\|_F + \|\bm{P}_S^{\perp}\bm E^{(m)}\|_F^2.\hspace{1cm}
	\end{eqnarray}
	\par Then, according to Lemma \ref{stoica} and Lemma \ref{LemmaRMF}, we have
	\begin{equation}
	\bm{P}_S^{\perp}\bm E^{(m)} \thicksim \mathcal{N}(0, \bm{P}_S^{\perp} \bm \Sigma_i^{(m)}(\bm{P}_S^{\perp})^T),
	\end{equation}
	and for any $\varepsilon >0$
	\begin{equation}\label{ProbIneq}
	\|\bm{P}_S^{\perp}\bm E^{(m)}\|_F^2 \leq (1+\varepsilon)\sum_{i=1}^{k_m} \Trace(\bm{P}_S^{\perp}\bm \Sigma_i(\bm{P}_S^{\perp})^T),
	\end{equation}
	holds with probability
	\begin{equation}
	\mathbb{P} \geq 1- \exp\{-\frac{k_m\cdot{n}\cdot\varepsilon^2}{C}\} . \nonumber
	\end{equation}
	
	Consider the right side of (\ref{ProbIneq}), we have
	\begin{eqnarray}
	\sum_{i=1}^{k_m}  \Trace(\bm{P}_S^{\perp}\bm \Sigma_i\bm{P}_S)  = \sum_{i=1}^{k_m} \bigg( \frac{1}{m} \Big( \sum_{\stackrel{j=1}{j\neq i}}^{k_m} \frac{\lambda_i \lambda_j}{(\lambda_i-\lambda_j)^2} \Trace(\bm{P}_S^{\perp} \bm u_j \bm u_j^T \bm{P}_S^{\perp T})  \nonumber \\
	+ \sum_{j=k_m+1}^{n}\frac{\lambda_i\sigma^2}{(\sigma^2-\lambda_i)^2} \Trace( \bm{P}_S^{\perp} \bm u_j \bm u_j^T \bm{P}_S^{\perp T}) \Big) \bigg),\hspace{1cm}\nonumber
	\end{eqnarray}
	because
	\begin{equation}
	\Trace(\bm{P}_S^{\perp} \bm u_j \bm u_j^T \bm{P}_S^{\perp T}) = \Trace( \bm u_j^T \bm{P}_S^{\perp} \bm u_j) \leq 1,
	\end{equation}
	we have
	\begin{eqnarray}\label{TraceIneq}
	\sum_{i=1}^{k_m}\Trace(\bm{P}_S^{\perp}\bm\Sigma_i\bm{P}_S)\leq \hspace{4.5cm}\nonumber \\\frac{1}{m}\left(\sum_{i=1}^{k_m} \sum_{\stackrel{j=1}{j\neq i}}^{k_m} \frac{\lambda_i \lambda_j}{(\lambda_i-\lambda_j)^2} +\sum_{i=1}^{k_m}(n-k_m)\frac{\lambda_i\sigma^2}{(\sigma^2-\lambda_i)^2}\right).
	\end{eqnarray}
	Combine (\ref{TraceIneq}) and (\ref{ProbIneq}), we have for any $\varepsilon>0$ and $0 \leq \delta < 1$, if
	\begin{eqnarray}
	\frac{1}{m}\left(  \sum_{i =1}^{k_m} \sum_{\stackrel{j=1}{j\neq i}}^{k_m} \frac{\lambda_i \lambda_j}{(\lambda_i-\lambda_j)^2}  + \sum_{i =1}^{k_m} (n-k_m)\frac{\lambda_i\sigma^2}{(\sigma^2-\lambda_i)^2} \right) \hspace{1cm}\nonumber \\
	\leq \frac{(\sqrt{\delta+1}-1)^2}{1+\varepsilon},
	\end{eqnarray}
	or equivalently,
	\begin{eqnarray}
	m \geq\frac{1+\varepsilon}{(\sqrt{\delta+1}-1)^2}\cdot \hspace{5cm}\nonumber \\
	\left( \sum_{i =1}^{k_m} \sum_{\stackrel{j=1}{j\neq i}}^{k_m} \frac{\lambda_i \lambda_j}{(\lambda_i-\lambda_j)^2}  + \sum_{i =1}^{k_m} (n-k_m)\frac{\lambda_i\sigma^2}{(\sigma^2-\lambda_i)^2} \right),
	\end{eqnarray}
	then
	\begin{equation}\label{concent1}
	\|\bm{P}_S^{\perp}\bm E^{(m)}\|_F^2 \leq (\sqrt{\delta+1}-1)^2,
	\end{equation}
	holds with probability
	\begin{equation}
	\mathbb{P} \geq 1- \exp\{-\frac{k_m\cdot{n}\cdot\varepsilon^2}{C}\} . \nonumber
	\end{equation}
	Then combining (\ref{perturbE}) with (\ref{concent1}), we get
	\begin{equation}
	\|\bm E\|_2\leq2(\sqrt{\delta+1}-1)+(\sqrt{\delta+1}-1)^2=\delta-1\leq{\delta},
	\end{equation}
	thus we have
	\begin{equation}
	T(\bm R^{(m)})^2\leq s_{k_m-k_m^S}((\bm{Q}_Y^{(m)})^T\bm{P}_S^{\perp}\bm{Q}_Y^{(m)})\delta^{k_m^S} + O(\delta^{k_m^S+1}),
	\end{equation}
	holds with overwhelming probability.
	
	Furthermore, according to the definition of elementary symmetric function of singular values in (\ref{ESF}) and (\ref{Rank1}), it can be verified easily that
	\begin{equation}
	s_{k_m-k_m^S}((\bm{Q}_Y^{(m)})^T\bm{P}_S^{\perp}\bm{Q}_Y^{(m)})=1,
	\end{equation}
	hence we have
	\begin{equation}
	T(\bm R^{(m)})^2\leq \delta^{k_m^S} + O(\delta^{k_m^S+1}),
	\end{equation}
	
	If the number $m$ of sample data is large sufficiently, then we have
	\begin{equation}
	k_m=d_1+d_2,\qquad{k_m^S=d_2},
	\end{equation}
	therefore
	\begin{equation}
	T(\bm R^{(m)})^2\leq \delta^{d_2} + O(\delta^{d_2+1}),
	\end{equation}
	
	On the other hand, using (\ref{H0perturb}) in lemma \ref{perturb}, we can similarly obtain the corresponding result for non-target scenario.
	\begin{equation}\label{OH0}
	|\det(\bm A+\bm E)-\det(\bm A)| \leq  s_{k_m-1}(\bm A)\|\bm E\|_2 + O(\|\bm E\|_2^2),
	\end{equation}
	When $m$ is large sufficiently, we have $k_m=d_1$ and $\bm{Q}_Y^{(m)}=\bm{Q}_C$. According to (\ref{detperturb}), we have
	\begin{equation}
	\det((\bm{Q}_Y^{(m)})^T\bm{P}_S^{\perp}\bm{Q}_Y^{(m)})=\Vol_{d_1+d_2}^2([\bm{Q}_S,\bm{Q}_C]): = \tau^2(\bm{H}_S,\bm{H}_C),\nonumber
	\end{equation}
	Similar to discussion above, for any $0 \leq \delta <1$ and $\varepsilon >0$, if
	\begin{eqnarray}
	m \geq\frac{1+\varepsilon}{(\sqrt{\delta+1}-1)^2}\cdot \hspace{5cm}\nonumber \\
	\left( \Big( \sum_{i =1}^{d_1} \sum_{\stackrel{j=1}{j\neq i}}^{d_1} \frac{\lambda_i \lambda_j}{(\lambda_i-\lambda_j)^2}+\sum_{i =1}^{d_1}(n-d_1)\frac{\lambda_i\sigma^2}{(\sigma^2-\lambda_i)^2}  \Big)\right),
	\end{eqnarray}
	then
	\begin{equation}
	\|\bm{P}_S^{\perp}\bm E^{(m)}\|_F^2 \leq (\sqrt{\delta+1}-1)^2,
	\end{equation}
	hold with probability
	\begin{equation}
	\mathbb{P} \geq 1-\exp\{-\frac{d_1\cdot{n}\cdot \varepsilon^2}{C}\}
	\end{equation}
	therefore
	\begin{equation}
	|T(\bm R^{(m)})^2-\tau^2(\bm{H}_S,\bm{H}_C)|\leq s_{d_1-1}(\bm{Q}_C^T\bm{P}_S^{\perp}\bm{Q}_C)\delta + O(\delta^{2}),
	\end{equation}
\end{proof}

\bibliography{IEEEabrv,CompressiveSensingMultiuserDetection.bib}
\bibliographystyle {IEEEtran}



%

\end{document}